\newtheorem{theorem}{Theorem}
\newtheorem{remark}{Remark}
\tikzset{->-/.style={decoration={
  markings,
  mark=at position #1 with {\pgftransformscale{1.5}\arrow{>}}},postaction={decorate}}}
 \tikzset{-<-/.style={decoration={
  markings,
  mark=at position #1 with {\pgftransformscale{1.5}\arrow{<}}},postaction={decorate}}}
\def\BibTeX{{\rm B\kern-.05em{\sc i\kern-.025em b}\kern-.08em
    T\kern-.1667em\lower.7ex\hbox{E}\kern-.125emX}}
\newcommand{\figref}[1]{Fig.~\ref{#1}}  
\newcommand{\tabref}[1]{Table~\ref{#1}}
\newcommand{\algref}[1]{Algorithm~\ref{#1}}
\newcommand{\secref}[1]{Section~\ref{#1}}
\newcommand{\appref}[1]{Appendix~\ref{#1}}
\newcommand{\lightcode}{\textsc{LightCode}\xspace}
\newcommand{\deeppolar}{\textsc{DeepPolar}\xspace}
\newcommand{\gn}{Gallager-Nakibo\u{g}lu\xspace}
\newcommand{\pb}{\textsc{PowerBlast}\xspace}
\DeclareMathOperator{\E}{E}
\def\E{\mathbb{E}}
\def\Pr{\mathbb{P}}
\begin{document}

\renewcommand{\thealgocf}{\arabic{algocf}}


\title{\lightcode: Light Analytical and Neural Codes for Channels with Feedback

}


\author[1]{Sravan Kumar Ankireddy\thanks{Correspondence to: Sravan Ankireddy \& Hyeji Kim (email: \\ \{sravan.ankireddy,hyeji.kim\}@utexas.edu, \\ {Source code available at: \href{https://github.com/sravan-ankireddy/lightcode}
{https://github.com/sravan-ankireddy/lightcode}}}}
\author[2]{Krishna Narayanan}
\author[1]{Hyeji Kim}
\affil[1]{University of Texas at Austin} 
\affil[2]{Texas A\&M University, College Station}

\renewcommand\Authands{, }


\maketitle

\begin{abstract}\label{sec:abstract}
    The design of reliable and efficient codes for channels with feedback remains a longstanding challenge in communication theory. 
    While significant improvements have been achieved by leveraging deep learning techniques, neural codes often suffer from high computational costs, a lack of interpretability, and limited practicality in resource-constrained settings. 
%
    %
    We focus on designing low-complexity coding schemes that are interpretable and more suitable for communication systems. We advance both analytical and neural codes. First, we demonstrate that \pb, an analytical coding scheme inspired by Schalkwijk-Kailath (SK) and \gn (GN) schemes, achieves notable reliability improvements over both SK and GN schemes, outperforming neural codes in high signal-to-noise ratio (SNR) regions.
    Next, to enhance reliability in low-SNR regions, 
    we propose \lightcode, a lightweight neural code that achieves state-of-the-art reliability while using a fraction of memory and compute compared to existing deep-learning-based codes. Finally, we systematically analyze the learned codes, establishing connections between \lightcode and \pb, identifying components crucial for performance, and providing interpretation aided by linear regression analysis.
    
    
\end{abstract}

\begin{IEEEkeywords}
Channels with Feedback, Deep Learning, Channel Coding, Feedback Coding, Finite Block Length Coding
\end{IEEEkeywords}


\section{Introduction}\label{sec:intro}

Shannon introduced the concept of feedback channel in~\cite{shannon1956zero}. In a channel with feedback, the transmitter \textit{cooperates} with the receiver to improve the probability of successful transmission by utilizing the \textit{feedback} from the receiver. 
In~\cite{shannon1956zero}, Shannon assumes a perfect {\em noiseless} feedback channel with {\em unit delay} and demonstrates that the availability of feedback at the transmitter does not change the capacity of the resultant forward channel for memoryless channels. Interestingly, while the capacity remains the same, significant improvements in error exponents can be achieved with the help of feedback. 


One of the seminal schemes in the noiseless feedback setting has been provided by Shalkwijk and Kailath in~\cite{schalkwijk1966coding_p1,schalkwijk1966coding_p2} (SK) using a simple linear encoding scheme resulting in a doubly exponential error exponent for finite block lengths. 
In~\cite{gallager2009variations}, a two-phase variant of the SK scheme, which we refer to as \gn (GN), was discussed that can lead to further exponential decay of the error, provided a sufficiently good SNR is available for the forward channel. 
%
%
The SK scheme has been enhanced in various ways. The Modulo-SK scheme~\cite{ben2017interactive} and schemes by Chance-Love~\cite{chance2011concatenated} and Mishra et al~\cite{mishra2023linear}, extends the SK scheme for noisy feedback, while compressed error correction (CEC)~\cite{ooi1998fast} and accumulative iterative code (AIC)~\cite{perotti2021accumulative} focus on reducing channel use through continuous error vector compression, using noiseless feedback.



While guaranteeing impressive error exponents, SK and other analytical coding schemes have not been adapted to practical communication systems, as the improvement in performance does not justify the cost incurred in terms of high-numerical precision, increase in the amount of feedback, and large number of rounds of communication. Because of this, analytical feedback coding schemes in practice are limited to automatic repeat request (ARQ) and hybrid-ARQ (HARQ) retransmission schemes, where the receiver provides simple one-bit feedback to indicate success (acknowledgment/ACK) or failure (negative acknowledgment/NACK). Extending ARQ to multi-bit feedback is an interesting research topic; for example, compressed error hybrid ARQ (CE-HARQ)~\cite{ankireddy2023compressed} and Griffin et al.~\cite{griffin2023code} propose using full feedback from the receiver to iteratively improve the error vector at the receiver.  

Recent advances in deep learning revived the interest in coding for channels with feedback by leveraging the expressive power~\cite{raghu2017expressive} of deep neural networks.
Several works proposed deep learning approaches to improve the performance of channel codes, ranging from augmenting analytical decoders with learnable parameters~\cite{nachmani2016learning, nachmani2018deep, ankireddy2023interpreting, hebbar2022tinyturbo} to creating novel neural network architectures based neural encoders and decoders~\cite{hebbar2023crisp, hebbar2024deeppolar} and improving the code design using sequential models~\cite{li2021learning,ankireddy2024nested}. 
For channels with feedback, deep learning techniques were used to design new encoding and decoding schemes that take advantage of the high-capacity feedback channel. Deepcode~\cite{kim2018deepcode} modeled both encoder and decoder functions as recurrent neural networks (RNNs) to process a bit stream in a sequential manner to directly minimize the end-to-end transmission error over an additive white Gaussian noise (AWGN) channel.
Several works further explored this idea of using learning-based approaches for modeling the encoding and decoding operations, which can be broadly classified into the RNN family of codes~\cite{safavi2021deep,mashhadi2021drf,kim2023robust} and the transformer family of codes~\cite{shao2023attentioncode,ozfatura2022all}, discussed in detail in~\secref{sec:neural_codes}. The current state-of-the-art is generalized block attention feedback (GBAF)~\cite{ozfatura2022all}, which uses self-attention and transformer architecture to perform block coding.

While these state-of-the-art deep-learning-based feedback codes provide tremendous improvements in BLER performance, they also come with significant memory and computational costs that may not be supported by the next generation of communication transceivers that operate with limited onboard resources. Therefore, it is desirable to develop lightweight codes that use simple schemes while providing desirable performance. To accomplish a reduced complexity coding scheme, we explore two different directions in this work. First, we review and understand the existing analytical feedback coding schemes to identify the limitations and propose a new feedback coding scheme that provides non-trivial performance improvements over the existing schemes for channels with passive, noiseless feedback. Next, we propose a lightweight deep-learning-based feedback coding scheme that can significantly reduce the complexity compared to neural block-feedback coding schemes by limiting ourselves to a symbol-by-symbol scheme instead of block coding schemes. 



Our main contributions are summarized as follows:

\begin{itemize}
    \item We provide a comprehensive review of the existing analytical and deep-learning-based coding schemes for channels with feedback and identify the limitations of existing approaches (\secref{sec:analytical} and \secref{sec:neural_codes}).
    
    
    \item We propose \pb, an analytical coding scheme that noticeably improves the performance over Schalkwijk-Kailath and two-phase \gn schemes (\secref{sec:pb}) and exhibits reliability comparable to state-of-the-art deep-learning-based coding schemes in regions of high-SNR (\secref{sec:results}).
    

    
    \item We propose \lightcode, a lightweight neural coding scheme that achieves a performance superior to current state-of-the-art feedback coding schemes using $10 \times$ fewer parameters and computational complexity (\secref{sec:deepcode}, \secref{sec:results}), while maintaining interpretability. 

    \item We analyze the representations learned by the \lightcode encoder using linear regression and draw comparisons with \pb. We also demonstrate that the relation between encoded symbols and the feedback is highly non-linear in the final rounds, underscoring the need for a deep-learning-based coding scheme (\secref{sec:analysis}).


\end{itemize}

\section{System Model}\label{sec:sysmodel}

We consider a feedback coding scheme with an AWGN forward channel and an AWGN feedback channel with noisy passive feedback. The goal is to transmit a message vector $\mathbf{u} \in \{0,1\}^K$ of length $K$ to the receiver using a total of $D$ independent channel uses \textit{i.e.,} the noise across the rounds is independent and identically distributed (i.i.d). This results in an overall coding rate of $R=\sfrac{K}{D}$. In this work, we consider symbol-by-symbol coding, where the block of $K$ bits will be mapped to one symbol. Hence, the terms block and symbol can be used interchangeably, and the number of rounds of communication for a rate $\sfrac{K}{D}$ code is $D$.

\begin{figure}[!htb]
    \centering
 	\includegraphics[width=\linewidth]{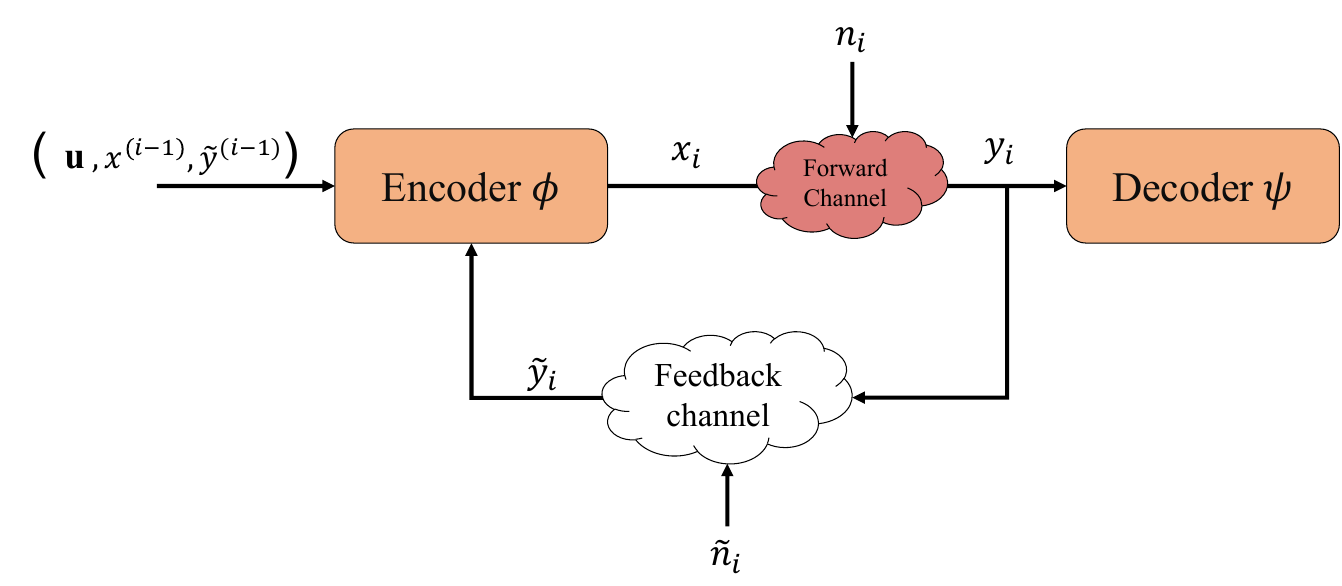}
 	\captionsetup{font=small}
 	\caption{Illustration of the $i^{\text{th}}$ round of communication for channels with feedback. The encoder takes as input the message bits $\mathbf{u}$ and the encoder output from previous rounds $x^{(i-1)}$, concatenated with the feedback from previous rounds $\Tilde{y}^{(i-1)}$, to compute $x_i$.}
    \label{fig:sysmodel}

\end{figure}

In the first round, the transmitter encodes the message block $\mathbf{u} \in \{0,1\}^K$ to a real-valued output $x_1 \in \mathbb{R}$ and transmits using the forward AWGN channel as
\begin{align}
    x_1 &= \phi(\mathbf{u}), \\
    y_1 &= x_1 + n_1,
\end{align}
where $\phi$ is the encoding function and $n_1 \sim \mathcal{N}(0,\sigma^2_{ff}) \in \mathbb{R}$ denotes the feedforward noise. The receiver then sends the noisy received symbol as feedback using the feedback channel as
\begin{equation}
    \Tilde{y}_1 = y_1 + \Tilde{n}_1,
\end{equation}
where $\Tilde{n}_1 \sim \mathcal{N}(0, \sigma_{fb}^2) \in \mathbb{R}$ denotes the feedback noise.

At round $i > 1$, the encoder $\phi$ computes the output $x_i$ using the input message bits $\mathbf{u}$ and the encoder outputs from previous $i-1$ rounds $x^{(i-1)} = \{ x_1, \dots, x_{i-1} \}$ and the feedback from previous $i-1$ rounds $\Tilde{y}^{(i-1)} = \{ \Tilde{y}_1, \dots, \Tilde{y}_{i-1} \}$ as 
\begin{equation}
    x_i = \phi \left(\mathbf{u}, x_1, \dots, x_{i-1}, \Tilde{y}_1, \dots, \Tilde{y}_{i-1} \right).
\end{equation}



At the end of $D$ rounds of communication, the decoder $\psi$ estimates the transmitted message vector $\hat{\mathbf{u}}$ using the received symbols from all $D$ rounds $\{ y_1, y_2, \dots, y_D \}$ as
\begin{equation}
    \hat{\mathbf{u}} = \psi \left( y_1, y_2, \dots, y_D \right),
\end{equation}
where $\psi$ is the decoding function and $\hat{\mathbf{u}} \in \{0,1\}^K$. The objective is to design an encoder-decoder pair $\{\phi,\psi\}$ that minimizes the probability of error $\Pr\{\mathbf{u} \neq \hat{\mathbf{u}}\}$ for a given number of rounds $D$, under a sum power constraint of $\sum_{i=1}^D \mathbb{E}[|x_i|^2]\ \leq\ D$.



\section{Analytical Coding Schemes for Channels with Noiseless Feedback}\label{sec:analytical}
%
In this section, we review analytical coding schemes for channels with noiseless feedback \textit{i.e.,} $\sigma_{fb}^2 = 0$. We begin by reviewing the celebrated Schalkwijk-Kailath coding scheme~\cite{schalkwijk1966coding_p1}, one of the seminal works in coding for channels with noiseless feedback. We then review a less widely-known scheme by \gn~\cite{gallager2009variations}. This is similar to the SK scheme for the first $D-1$ rounds. Still, it deviates significantly in the last round, tailoring for the transmission of discrete messages, often significantly improving the performance in one round of communication. The \gn (GN) scheme is typically not considered as a baseline as it exhibits a worse error rate compared to SK in certain regions of SNR. However, in the next section, we introduce a novel analytical coding scheme, building on the SK and GN schemes. This new scheme achieves a significantly lower error rate than the SK or GN schemes and is often comparable to highly complex neural coding schemes in high-SNR regions, defying the conventional belief that neural coding schemes are much more reliable than analytical ones.

\subsection{Schalkwijk-Kailath coding scheme}
The SK scheme considers the problem of transmitting a fixed number of bits on an AWGN forward channel and a noiseless feedback channel. The transmission begins by mapping $K$ bits of information symbols to a single $2^{K}$-ary pulse-amplitude modulation (PAM) symbol $\Theta$ from the constellation
\begin{equation}
   \Theta \in \{ \pm 1 \eta, \pm 3 \eta, \dots, \pm (2^K -1) \eta\}, 
   \label{eqn:pam}
\end{equation}
where $\eta=\sqrt{\sfrac{3}{(2^{2K}-1})}$ is the scaling factor to ensure unit power normalization of the PAM constellation. Even though a block of information is transmitted since all the $K$ bits of information are mapped to one PAM symbol, the SK scheme at its core can be considered a symbol-by-symbol feedback scheme. We now describe the SK coding scheme in detail.

In the first round, the uncoded PAM symbol is transmitted after accounting for power constraint $P$ \textit{i.e.,} $x_1 = \sqrt{P} \Theta$ as
\begin{equation}
    y_1 = x_1 + n_1,
\end{equation}
where $n_1 \in \mathcal{N}(0,\sigma_{ff}^2)$ is the noise in the forward AWGN channel. The received symbol $y_1$ is then sent back to the transmitter noiselessly. For the second round, the transmitter first computes the receiver's estimate of the transmitted symbol based on $y_1$ as $\hat{\Theta}_1 = \frac{ \sqrt{P}y_1}{ P + \sigma^2_{ff}}$ using linear minimum mean-square error (LMMSE). 
In the second round, after receiving $y_1$ as feedback, the transmitter sends the scaled version of the {\em error} in the LMMSE estimate from the previous round, i.e., $\epsilon_1= \hat{\Theta}_1 - \Theta$. This process continues for the remaining rounds. 
In other words, starting from round $i=2$, the goal of the transmitter is to transmit the error in estimate from the previous round, $\epsilon_{i-1} = \hat{\Theta}_{i-1} - \Theta$, after scaling appropriately to satisfy the power constraint. The complete algorithm is described in~\algref{alg:SK}.
%

\medskip

\noindent\textbf{Error analysis.\ } 
It is shown in~\cite{ben2017interactive} that the probability of error for rate $\sfrac{K}{D}$ SK scheme is given by
\begin{equation}
    p_{SK} =  2(1 - 2^{-K})Q\left(   \sqrt{ \frac{3S(1+S)^{D-1}}{2^{2K}-1}     }       \right),
\end{equation}
where $S$ denotes the SNR of the forward AWGN channel on a linear scale. Note that~\cite{ben2017interactive} assumes a minimum-variance unbiased estimator (MVUE) at the end of round $1$, $\hat{\Theta}_1 = \frac{y_1}{\sqrt{P}}$, for ease of analysis but it is sub-optimal in terms of error in the estimate after round 1.








\RestyleAlgo{ruled}
\SetAlgoNlRelativeSize{-1}
\LinesNumbered
\SetKwComment{Comment}{/* }{ */}
\SetKwInput{KwInput}{Input}
\SetKwInput{KwOutput}{Output}
\SetKwInput{KwData}{Data}
\SetKwInput{KwIn}{In}
\SetKwInput{KwOut}{Out}

\setcounter{algocf}{0}

\begin{algorithm}[hbt!]
\caption{Schalkwijk-Kailath (SK) coding scheme}
\label{alg:SK}

\KwInput{Message symbol $\Theta$, number of rounds $D$, forward noise variance $\sigma_{ff}^2$ }

\vspace{5pt}
\textbf{Round 1:}\
\textbf{Tx:} 
Power normalization: $x_1 = \sqrt{P}\Theta$\;
Forward channel: $y_1 = x_1 + n_1$\;
\textbf{Rx:} LMMSE estimate of transmit symbol $\hat{\Theta}_1 = \frac{\sqrt{P}y_1}{P+\sigma^2_{ff}};$ 

\vspace{5pt}
\texttt{/* Tx communicates the error in estimate $\hat{\Theta}_1 - \Theta$ over the next $D-1$ rounds */} \\
\vspace{5pt}

\While{$2 \leq i \leq D$}{
    \textbf{Tx:} Compute the error in estimate of previous round  ${\epsilon}_{i-1} = \hat{\Theta}_{i-1} - \Theta$ \;
    Power normalization: $x_i = \frac{\sqrt{P}}{\sigma_{i-1}} {\epsilon}_{i-1}$, $\sigma^2_{i-1} = \mathbb{E}[\epsilon^2_{i-1}]$\;
    Forward channel:  $y_{i} = x_{i} + n_{i}$\;
    \textbf{Rx:} LMMSE estimate of transmit symbol $\hat{\epsilon}_{i-1} = \frac{\sqrt{P}\sigma_{i-1}}{P + \sigma_{ff}^2} y_{i}$ \;
    \vspace{5pt}
    Update the estimate of $\Theta$ as: $\hat{\Theta}_i = \hat{\Theta}_{i-1} - \hat{\epsilon}_{i-1}$
}
\vspace{5pt}
\textbf{Decoding}: Map $\hat{\Theta}_D$ to the closest symbol in the $2^K$ PAM constellation.

\end{algorithm}




\subsection{\gn coding scheme}

In~\cite{gallager2009variations}, \gn proposed a two-phase scheme that builds on the Elias scheme~\cite{Elias1956ChannelCapacity}, also closely related to the SK scheme. While the SK scheme considers the problem of transmitting a discrete symbol, Elias studied the problem of transmitting a Gaussian random variable $U \sim \mathcal{N}(0,\sigma^2)$. The strategy for forward and feedback transmissions is similar to SK, where a scaled version of the error in the LMMSE estimate is transmitted in every round $ I>1$. The main difference between the SK and Elias schemes lies in that the SK scheme aims to refine the message itself, while the Elias scheme aims to refine the estimate of noise added in the very first transmission.


GN scheme operates in two phases and 
relies on the assumption that after a sufficiently large number of rounds of the Elias scheme, 
the effective SNR for the forward channel shall be adequate for the noise variance to be considered small compared to the distance between the symbols in the PAM constellation. 
In such a high-SNR regime, a strategy superior to the Elias scheme can be implemented by taking advantage of the discrete nature of the signal. Instead of transmitting the original error vector 
with respect to $2^K$-ary PAM, 
the integer difference between the PAM index of the estimate and the true PAM symbol is transmitted. 
We refer to this as {\em discrete-symbol} scheme. 
This method results in an error exponent that decreases with an exponential \textit{order}, which increases linearly with the number of rounds. For this work, we assume that the high-SNR region is realized in the final round of communication, which is valid according to the 2-phase strategy described in~\cite{gallager2009variations}.

We now describe the GN scheme in detail. 
The first round of GN is simply uncoded PAM, except for the power allocation. In~\cite{gallager2009variations}, it is shown the optimal power distribution across $D$ rounds is attained by choosing $P_1$ and $P_2$ such that $P_1 + (D-1)P_2 = DP$ and $P_1 = P_2 + 1$, $P_1$ is the power constraint in round 1 and $P_2$ is the power constraint in remaining rounds. Hence, the transmission in round 1 is given by
\begin{equation}
    y_1 = \sqrt{P_1}\Theta + n_1.
\end{equation}
For the remaining rounds, the goal of the transmitter is to communicate the noise $n_1$ to the receiver as in the Elias scheme, 
where the LMMSE estimate at the receiver is improved iteratively, and a maximum likelihood (ML) detection is used at the end of $D-1$ rounds of communication to map the estimate to the original PAM constellation. 

Finally, for the last round, GN uses a discrete-symbol scheme suitable for the high-SNR region by transmitting the error in the PAM index $U$. We follow the assumption from~\cite{gallager2009variations} that the high-SNR region guarantees that $U \in \{-1, 0, 1\}$ with high probability.
In~\cite{gallager2009variations}, an ML decoder is used for the ease of analysis across multiple rounds of discrete symbol schemes, which is sub-optimal. However, since we assume only one round of the discrete-symbol scheme in this work, we assume a maximum-a-posteriori (MAP) decoder, which is optimal for the performance. Hence, the final round of GN can be viewed as MAP detection on a constellation $\{-1, 0, 1\}$ with probability distribution $\{\sfrac{p_{GN1}}{2}, 1 - p_{GN1}, \sfrac{p_{GN1}}{2}\}$,  $p_{GN1}$ is the probability of error after $D-1$ rounds. The complete algorithm is described in~\algref{alg:GN} in Appendix,~\secref{app:alg_gn}.



\medskip

\noindent\textbf{Error analysis.\ } The probability of error for a rate $\sfrac{K}{D}$ \gn scheme can be computed in two phases. The first phase can be analyzed as one round of uncoded PAM followed by $D-2$ rounds of Elias scheme, for which it is shown in~\cite{gallager2009variations} that the probability of error is given by
\begin{equation}
    p_{GN1} =  2(1 - 2^{-K})  Q\left(   \sqrt{ \frac{3(1+S - \frac{1}{D-1})^{D-1}}{2^{2K}-1}     }       \right),
\end{equation}
where $S$ denotes the SNR of the forward AWGN channel on a linear scale.

In the second phase, which corresponds to the final round, the discrete integer difference between the PAM index of the decoded message $\hat{M}$  and the index of the true message $M$ is transmitted. Here, the message index 
$M \in \{0, 1, 2, \dots, 2^K-1\}$ is deterministic based on the transmitted symbol $ \Theta \in \{ \pm 1 \eta, \pm 3 \eta, \dots, \pm (2^K -1) \eta\}$ and can be computed using a predetermined mapping, where $\eta$ is the scaling factor to ensure unit power normalization.  Similarly, $\hat{M}$ corresponds to the message index whose corresponding symbol is closest to the estimated transmitted symbol $\hat{\Theta}$.

As the error in the $(\hat{M} - M)$ lies in $ \{-1, 0, 1\}$ with probability distribution $\{ \sfrac{p_{GN1}}{2}, 1 -  p_{GN1},  \sfrac{p_{GN1}}{2}\}$ based on the assumption from~\cite{gallager2009variations}, the probability of error in decoding using a MAP decoder can be computed as
\begin{equation}
            p_{GN} = 2(1-p_{GN1})Q \left( \gamma \sqrt{S} \right) + p_{GN1}Q \left( \left( \frac{1}{p_{GN1}} - \gamma\right) \sqrt{S} \right),
\end{equation}
%
where $S$ is the SNR of the forward AWGN channel on a linear scale, and $\gamma$ is the detection threshold given by
\begin{equation}
            \gamma = \frac{1}{2 \sqrt{p_{GN1}}} + \frac{\sqrt{p_{GN1}}}{S} \log \left( \frac{2(1-p_{GN1})}{p_{GN1}}  \right). 
\end{equation}


\section{Proposed analytical coding scheme: \pb}\label{sec:pb}
In this section, we propose \pb, a hybrid 2-phase scheme, to iteratively refine the LMMSE estimate of the PAM symbol at the receiver and shift to a discrete symbol scheme that takes advantage of the sparsity of the error in the estimate of the PAM index in the final round. Through theoretical analysis and empirical results, we demonstrate that the performance of \pb is better than both SK and GN in several regimes of interest. 

For a rate $\sfrac{K}{D}$ code, \pb begins by mapping $K$ bits of information to a $2^K$ PAM constellation and transmits the symbols on the forward AWGN channel. The receiver performs an LMMSE estimate and sends the estimate as feedback to the transmitter through the noiseless feedback channel. This continues for the first $D-1$ rounds. In the final round, \pb uses a discrete symbol strategy to send the error in the PAM index of the estimate. The complete algorithm is described in~\algref{alg:PB}. 

\begin{algorithm}[hbt!]
\caption{\pb coding scheme}
\label{alg:PB}

\KwInput{Message symbol $\Theta$, number of rounds $D$, forward noise variance $\sigma_{ff}^2$ }

\vspace{5pt}
\textbf{Round 1:}\
\textbf{Tx:} 
Power normalization: $x_1 = \sqrt{P}\Theta$\;
Forward channel: $y_1 = x_1 + n_1$\;
\textbf{Rx:} LMMSE estimate of transmit symbol $\hat{\Theta}_1 = \frac{\sqrt{P}y_1}{P + \sigma_{ff}^2};$ 

\vspace{5pt}
\texttt{/* Tx communicates the error in estimate $\hat{\Theta}_1 - \Theta$ over the next $D-2$ rounds */} \\
\vspace{5pt}

\While{$2 \leq i \leq D-1$}{
    \textbf{Tx:} Compute the error in estimate of previous round  ${\epsilon}_{i-1} = \hat{\Theta}_{i-1} - \Theta$ \;
    Power normalization: $x_i = \frac{\sqrt{P}}{\sigma_{i-1}} {\epsilon}_{i-1}$, $\sigma^2_{i-1} = \mathbb{E}[\epsilon^2_{i-1}]$\;
    Forward channel:  $y_{i} = x_{i} + n_{i}$\;
    \textbf{Rx:} LMMSE estimate of transmit symbol $\hat{\epsilon}_i = \frac{\sqrt{P}\sigma_{i-1}}{P + \sigma_{ff}^2} y_{i}$ \;
    \vspace{5pt}
    Update the estimate of $\Theta$ as: $\hat{\Theta}_i = \hat{\Theta}_{i-1} - \hat{\epsilon}_{i-1}$
}

\vspace{5pt}
\textbf{Decoding after round $D-1$}: Map $\hat{\Theta}_{D-1} = \frac{\hat{x_1}}{\sqrt{P_1}\eta}$ to the closest symbol in the $2^K$ PAM constellation.

\vspace{5pt}
\texttt{/* High-SNR scheme for final round */} \\
\vspace{5pt}

\textbf{Round D:}\
\textbf{Tx:} Compute the difference in PAM indices $U = \hat{M} - M$, where $\hat{M}$ and $M$ correspond to the integer index from PAM constellation for $\hat{\Theta}$ and $\Theta$ respectively\;
Power normalization: $x_D =  \frac{\sqrt{P}U}{\sigma_{ff}}$;
Transmit: $y_D = x_D + n_D$\; 
\vspace{5pt}
\textbf{Final Decoding}: Use MAP decoder to detect $\hat{U}$ and detect original PAM signal using $\hat{U}$.

\end{algorithm}

We note that \pb can be seen as a combination of SK and GN. Rate $\sfrac{K}{D}$ \pb can be interpreted as $D-1$ rounds of rate $\sfrac{K}{D}$ SK scheme with LMMSE estimate, followed by the discrete variant of GN scheme for the final round, resulting in a non-negligible improvement in performance in certain regions of SNR. The error analysis for the \pb scheme 
is provided below. 








\begin{theorem}[Error Analysis]
The probability of error for rate $\sfrac{K}{D}$ \pb scheme is given by
\begin{equation}
    p_{PB} = 2(1-p_{SK})Q \left( \gamma \sqrt{S} \right) + p_{SK}Q \left( \left( \frac{1}{p_{SK}} - \gamma\right) \sqrt{S} \right)
\end{equation}
where \begin{equation}
    p_{SK} =  2(1 - 2^{-K})Q\left(   \sqrt{ \frac{3S(1+S)^{D-2}}{2^{2K}-1}     }       \right)
\end{equation}
and $\gamma$ denotes the detection threshold and $S$ denotes the SNR of the forward AWGN channel on a linear scale, and $Q()$ is the standard Q-function.
\end{theorem}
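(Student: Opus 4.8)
The plan is to decompose the PB error into the two phases of the scheme and treat them as a cascade: an LMMSE-refinement phase over rounds $1$ through $D-1$, followed by a single discrete-symbol (MAP) round. Since the forward noise is i.i.d.\ across rounds, the two phases are statistically independent, so I can first characterize the error distribution at the end of phase one and then feed it in as the prior for the final-round detection.

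First I would pin down the phase-one error probability. Rounds $1$ through $D-1$ of PB coincide exactly with rounds $1$ through $D-1$ of the SK scheme under LMMSE estimation, i.e.\ a full SK scheme run for $D-1$ rounds. Hence the probability $p_{SK}$ that the receiver's PAM decision $\hat\Theta_{D-1}$ lands on the wrong symbol is read off the SK error expression $2(1-2^{-K})Q(\sqrt{3S(1+S)^{D-1}/(2^{2K}-1)})$ by substituting $D \mapsto D-1$, which produces the claimed $p_{SK}$ with exponent $(1+S)^{D-2}$. Next I would characterize the statistics of the index error $U = \hat M - M$ transmitted in round $D$: invoking the high-SNR assumption inherited from the GN analysis, a phase-one detection error almost surely maps to an adjacent PAM symbol, so $U \in \{-1,0,1\}$, and by symmetry of the Gaussian forward noise $U=\pm1$ each occurs with probability $p_{SK}/2$ while $U=0$ occurs with probability $1-p_{SK}$. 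This is precisely the ternary prior of the GN final-round model, with $p_{GN1}$ replaced by $p_{SK}$.

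Finally I would analyze the last round as MAP detection of $U\in\{-1,0,1\}$ from $y_D = x_D + n_D$ under this prior. Writing the likelihood-ratio test between $U=0$ and $U=\pm1$, the Gaussian exponents are linear in $y_D$, so the decision region is delimited by a single threshold; solving the log-likelihood-ratio equation and normalizing by the forward SNR delivers the stated $\gamma$. Integrating the Gaussian tails over the two error events --- $U=0$ misdetected as $\pm1$ (weight $1-p_{SK}$, probability $2Q(\gamma\sqrt S)$) and $U=\pm1$ misdetected as $0$ (weight $p_{SK}$, probability $Q((1/p_{SK}-\gamma)\sqrt S)$) --- and summing yields $p_{PB}$. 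Because this final-round model is identical in form to GN's, the computation is reused verbatim with $p_{GN1}\to p_{SK}$.

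The main obstacle I anticipate lies in the last step: getting the power normalization of the discrete symbol and the resulting effective per-symbol SNR right, so that the two $Q$-function arguments come out exactly as $\gamma\sqrt S$ and $(1/p_{SK}-\gamma)\sqrt S$, and confirming that higher-order index errors ($|U|\ge 2$) are genuinely negligible --- this is what justifies the ternary prior and, with it, the clean two-phase decomposition on which the whole argument rests.
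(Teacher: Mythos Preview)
Your proposal is correct and follows essentially the same two-phase decomposition as the paper: $D-1$ rounds of SK to obtain $p_{SK}$ via the substitution $D\mapsto D-1$ in the known SK error formula, followed by MAP detection on the ternary constellation $\{-1,0,1\}$ with prior $\{p_{SK}/2,\,1-p_{SK},\,p_{SK}/2\}$ under the GN high-SNR assumption that $|U|\ge 2$ is negligible. The paper's proof is slightly terser (it does not spell out the likelihood-ratio derivation of $\gamma$ and explicitly invokes the MVUE simplification of~\cite{ben2017interactive} in round~1), but the logical skeleton and the final computation are the same.
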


\begin{proof}
        We begin by computing the probability of error for phase 1, which consists of $D-1$ rounds of the SK scheme. For ease of analysis, we follow the same assumption of ~\cite{ben2017interactive}, where the estimator after round one is assumed to be MVUE instead of LMMSE, implying,
        \begin{equation}
            \hat{\Theta}_1 = \frac{y_1}{\sqrt{P}}.
        \label{eq:mvue}
        \end{equation}
        Further, based on~\algref{alg:SK}
        , we can view the effective channel corresponding to rounds 1 to $D-1$ as one round with effective SNR $S.(1+S)^{D-2}$, $S = \sfrac{P}{\sigma_{ff}^2}$ is SNR for the forward channel in linear scale. It is straightforward to show that the probability of error incurred in transmitting a symbol from unit power normalized $2^K$ PAM, using an optimal detector, is
        \begin{align*}
            p_e &= 2(1 - 2^{-K})Q\left(\frac{\sqrt{P}\eta}{\sigma} \right)\\
                &=  2(1 - 2^{-K})Q\left(    \sqrt{ \frac{3S}{2^{2K}-1}     }     \right),
        \end{align*}
        where $S$ is the forward channel SNR in linear scale and $Q()$ is the Q-function. Hence, the effective probability of error after $D-1$ rounds of SK is
        \begin{equation}
            p_{SK} = 2(1 - 2^{-K})Q\left(   \sqrt{ \frac{3S(1+S)^{D-2}}{2^{2K}-1}     }       \right).
        \end{equation}
        
        We now proceed to compute the probability of error for phase two, which is transmitting the difference in integer (message) index over an AWGN channel. The key assumption for the high-SNR region is that the noise variance is small enough (compared to the effective forward SNR) that any errors beyond decoding to adjacent PAM symbols \textit{i.e.,} errors beyond 1 index difference are essentially negligible~\cite{gallager2009variations}. Hence, the problem can be rephrased as finding the probability of error to communicate a symbol from the constellation $\{ -1, 0, 1 \}$, with probability distribution $\{ \sfrac{p_{SK}}{2}, 1 - p_{SK}, \sfrac{p_{SK}}{2} \}$ using MAP decoder. This can be computed as
        \begin{equation}
            p_{PB} = 2(1-p_{SK})Q \left( \gamma \sqrt{S} \right) + p_{SK}Q \left( \left( \frac{1}{p_{SK}} - \gamma\right) \sqrt{S} \right),
        \end{equation}
where $\gamma$ is the detection threshold and $S$ is the SNR of the forward AWGN channel on a linear scale.
         
\end{proof}


\noindent \textbf{\pb vs. SK and GN  schemes.} 
The key difference between \pb and SK is in the last round of communication, where \pb shifts to a discrete symbol scheme, as seen in line 13 of~\algref{alg:PB}, by taking advantage of the high effective SNR over the initial $D-1$ rounds \textit{i.e.,} the (effective) noise variance is much smaller than the distance between adjacent PAM symbols. Hence, it is beneficial to now transmit the error in the PAM constellation index, which lies in $\{-1, 0, 1\}$ with a very high probability, resulting in a much lower probability of decoding error.

Further, compared to GN, the key difference in \pb is the choice of information transmitted after round 1. GN communicates and refines the noise from round 1, $n_1$, starting round 2, as seen in line 4 of~\algref{alg:GN}. In contrast, \pb transmits and refines the error in LMMSE estimate of $\Theta$, as seen in line 4 of~\algref{alg:PB}. It is seen from~\cite{ben2017interactive} that the effective SNR (in linear scale) after $D$ rounds is given by $S(1+S)^{D-1}$ for GN and $(1+S)^{D} - 1$ for \pb. 
While the difference in effective SNR becomes negligible for large $D$, it can be significant when operating for a finite number of rounds, resulting in the superior performance of \pb.


The performance comparison between the error expressions for a general SNR and rate is not straightforward (as they are represented in terms of Q functions). Instead, we limit our comparison to the canonical settings considered in the recent literature on channels with feedback (e.g.,~\cite{ozfatura2022all}), for rates $\sfrac{3}{6}$ and $\sfrac{3}{9}$ and plot the BLER performance in~\figref{fig:bler_gn_vs_pb}; we observe 
significant gains of \pb compared to both SK and GN in terms of the BLER performance.

\begin{figure}[ht]
\centering 
\begin{minipage}[b]{0.9\linewidth}
  \centering
  \subfloat[Rate $R = \sfrac{3}{6}$]{%
    \includegraphics[clip,width=\linewidth]{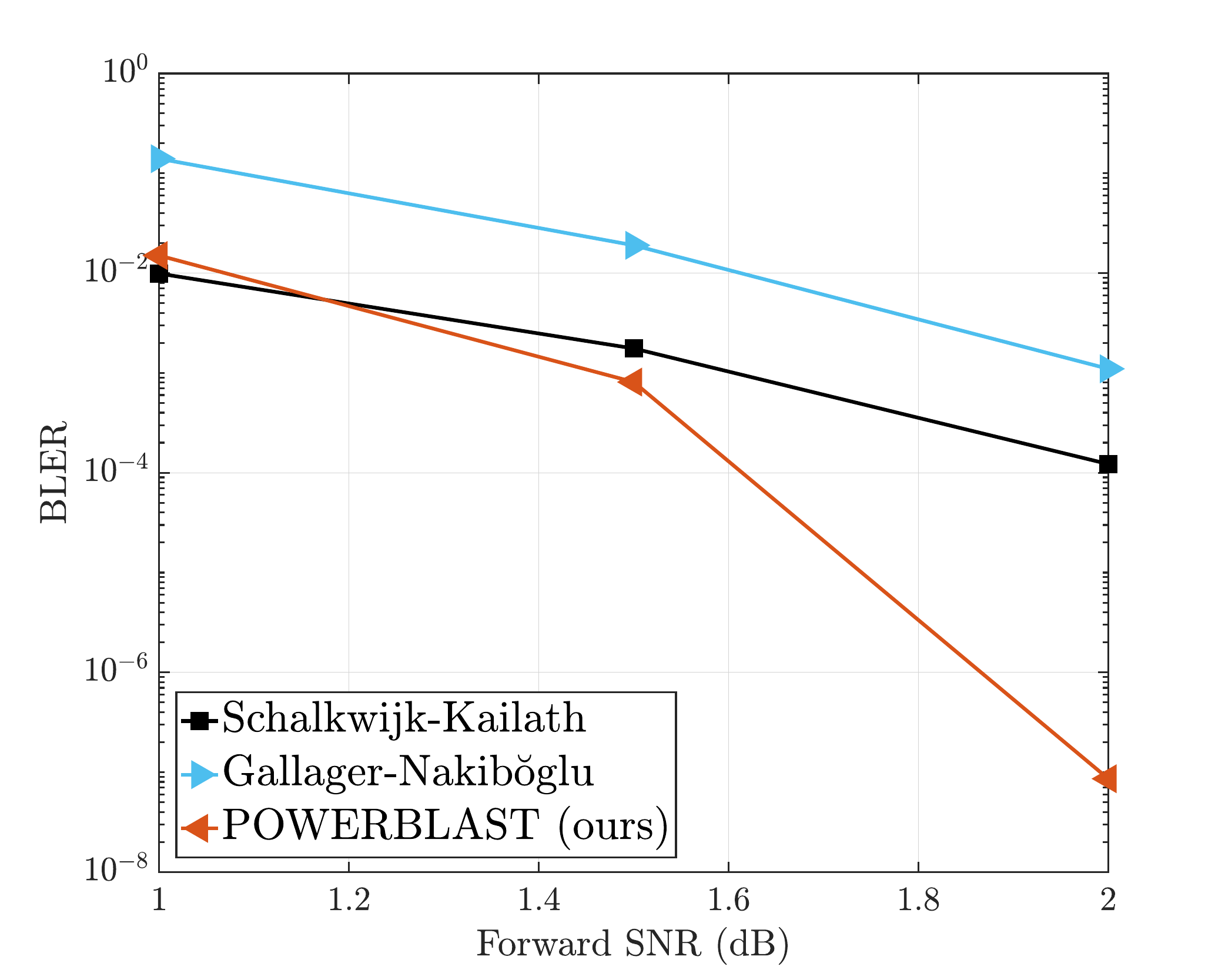}
  }
\end{minipage}
\vspace{0.5cm} 
\begin{minipage}[b]{0.9\linewidth}
  \centering
  \subfloat[Rate $R = \sfrac{3}{9}$]{%
    \includegraphics[clip,width=\linewidth]{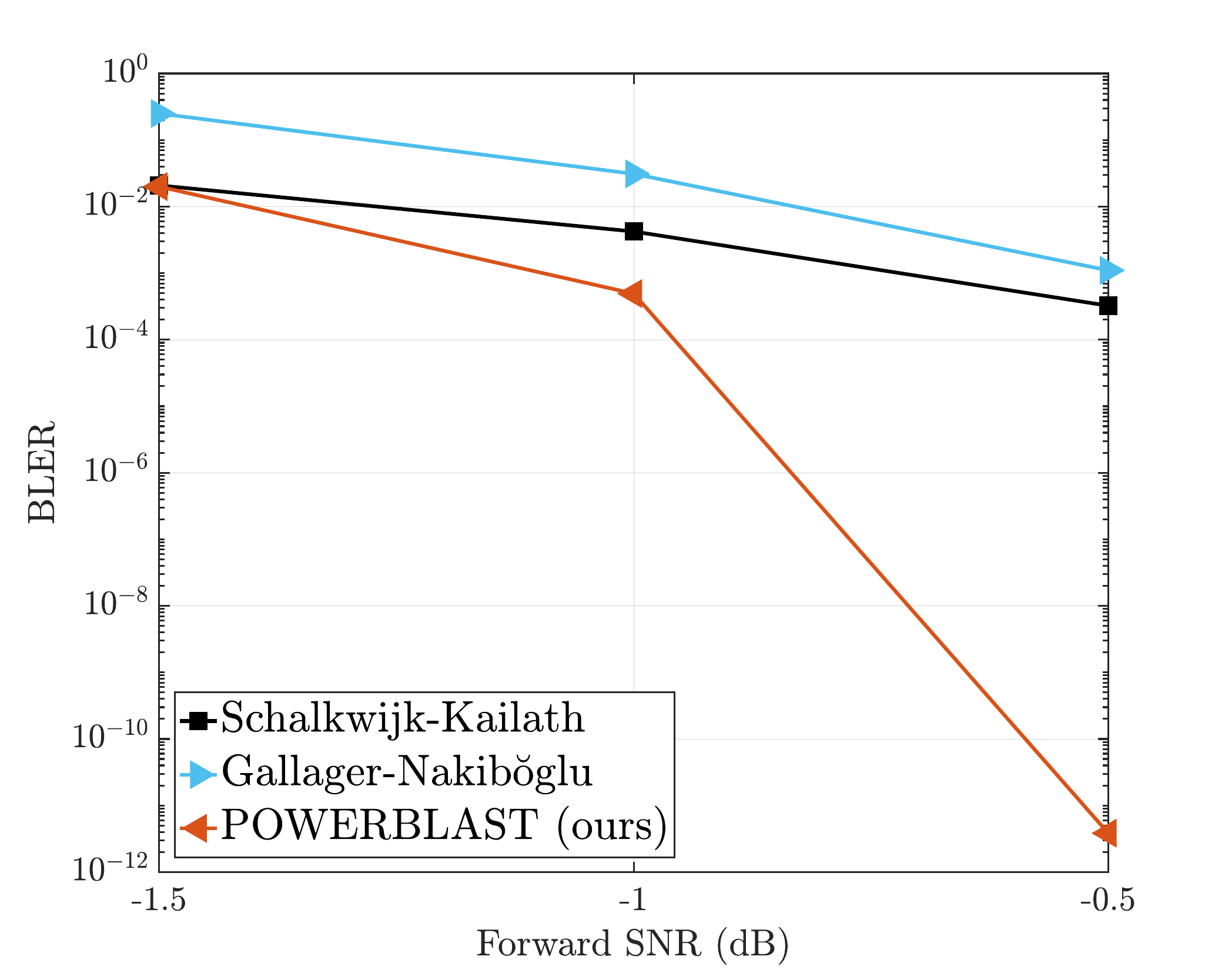}
  }
\end{minipage}
\captionsetup{font=small}
\caption{By combining the SK and discrete-symbol strategy of the GN scheme, \pb noticeably improves the BLER performance upon both SK and GN schemes.}
\label{fig:bler_gn_vs_pb}
\end{figure}




From the results so far, we have demonstrated that \pb provides the best performance among the analytically tractable solutions for channels with noiseless feedback that are recently considered in the literature (to demonstrate the reliability of deep-learning-based coding schemes~\cite{ozfatura2022all}). In the coming sections, we consider deep-learning-based coding schemes such as GBAF and show that, surprisingly, the analytical \pb coding scheme often delivers 
competing performance. Nevertheless, \pb still falls short when the SNR is very low; thus, we investigate conditions under which deep-learning-based schemes provide the highest gain. Finally, we propose \lightcode, a lightweight neural coding scheme that achieves state-of-the-art performance with very low complexity.

\section{Deep Learning based Coding schemes}\label{sec:neural_codes}


In this section, we review the current state of the learning-based codes for channels with feedback, closely analyze the generalized block attention feedback (GBAF)~\cite{ozfatura2022all}, discuss the shortcomings, and provide motivation for a new learning-based coding scheme we introduce in the next section.   


\subsection{RNN families}

Deep-learning-based algorithms are capable of modeling complex input-output relations. One of the critical challenges in designing codes for channels with feedback is accurately computing the subsequent transmission that minimizes the probability of decoding error, conditioned on the feedback from previous rounds. Existing classical coding schemes employ linear estimators at the receivers and model the dependencies in a linear fashion at the transmitter, which is sub-optimal. Instead, the sequential nature of the feedback from previous can be better leveraged by using deep-learning architectures such as RNNs tailored for processing sequential data. Deepcode~\cite{kim2018deepcode} demonstrated this advantage in modeling the dependencies across bits and rounds using bi-directional gated recurrent units (GRUs). Several follow-up works investigated the use of other RNN-based architectures such as long short-term memory  (LSTMs)~\cite{safavi2021deep,mashhadi2021drf} and, more recently, Robust Coding~\cite{kim2023robust} was proposed, which combined attention mechanism with bi-directional GRUs to optimize the symbol-by-symbol code design for noisy feedback channels across the rounds. Despite the promising performance, one disadvantage of using RNN-based architecture is the necessity to store the hidden states of the model, which are typically of much higher dimensions than the inputs and demand a lot of memory. Additionally, the sequential and iterative nature of encoding and decoding in RNNs can result in significant latency in the system.


\begin{figure*}[!htb]
    \centering
 	\includegraphics[width=0.9\linewidth]{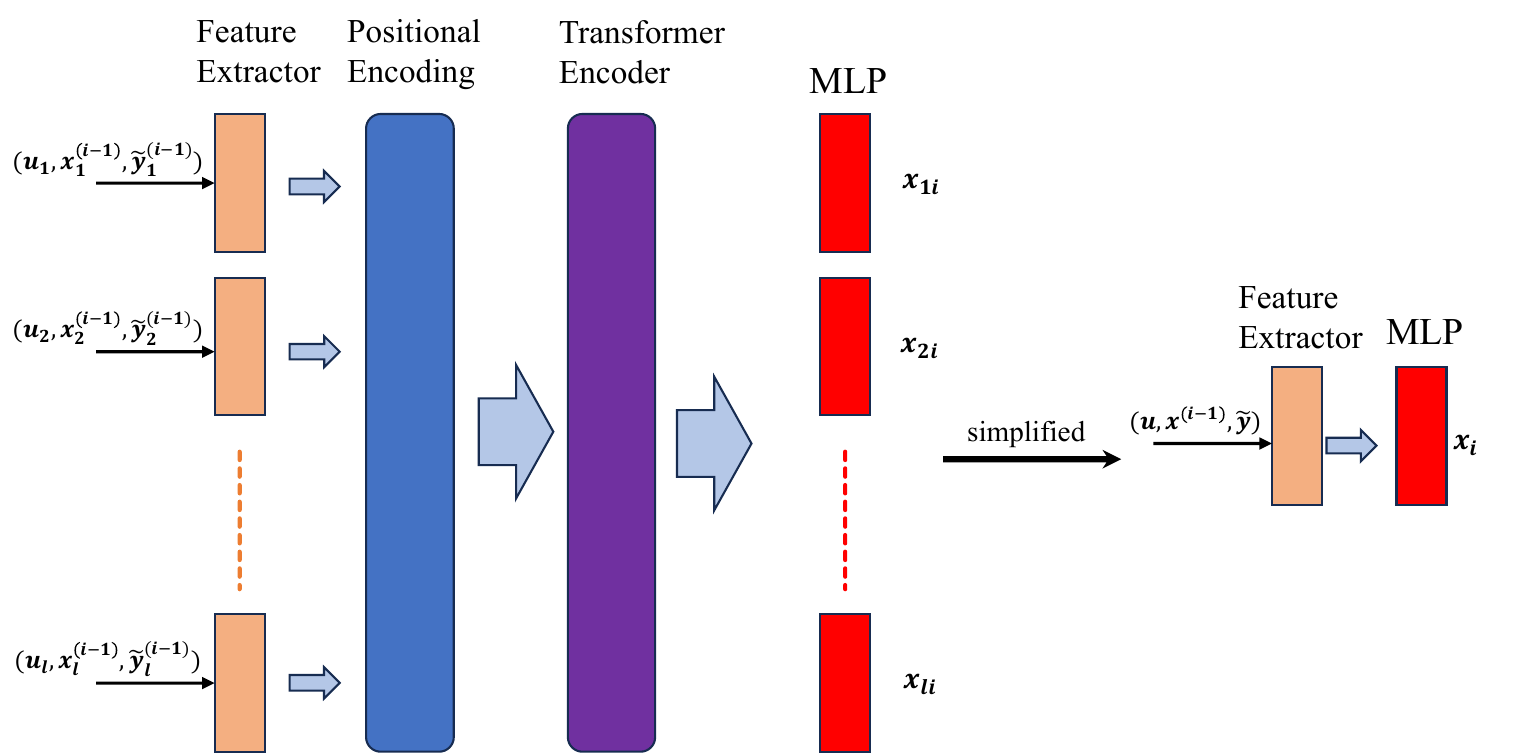}
 	\captionsetup{font=small}
 	\caption{
    (Left)Architecture for GBAF: The positional encoding and transformer encoding modules are used for block coding to encourage the mixing of symbols across the positions. (Right): Using a symbol-by-symbol scheme, \lightcode significantly reduces the complexity of encoding and achieves more than $10$x reduction in the number of parameters. On the left, we see the architecture for GBAF~\cite{ozfatura2022all}, and on the right, we see the architecture for \lightcode (ours). }
    \label{fig:simple_arch}
\vspace{-1em}
\end{figure*}

\subsection{Transformer families}\label{sec:gbaf}
In another line of work, self-attention-based transformer architectures have been explored for designing neural feedback codes. AttentionCode~\cite{shao2023attentioncode} introduced the idea of replacing RNN-based architecture with pure attention-based models, resulting in better alignment between a symbol and the corresponding feedback. In other words, AttentionCode can be viewed as Deepcode with transformer architecture. By leveraging the attention mechanism, AttentionCode creates temporal correlations at the transmitter
for encoding and exploits these temporal correlations
at the receiver for decoding. Further, the inputs to the encoder and decoder transformers are restructured to align each bit with the corresponding noise from multiple rounds as a single column, processed by the self-attention mechanism. More recently, GBAF~\cite{ozfatura2022all} introduced the idea of performing block coding across the codeblock, in addition to temporal coding across the rounds, resulting in orders of magnitude improvement in performance at extremely low SNRs, explained in detail below. 

As illustrated in \figref{fig:simple_arch}, the encoding of GBAF is performed across the rounds by causally concatenating the message and feedback symbols from previous rounds and using a series of feature extractors and multi-layer perception (MLP) modules. Additionally, positional encoding (PE) and a self-attention-based transformer encoder layer are deployed to encourage the mixing of information across the symbols within a codeblock, leading to block coding. 
More specifically, a block of $L$ bits is divided into $l$ sub-blocks of $K$ bits each and first encoded independently using a feature extractor. Next, PE and transformer encoder modules perform cross-symbol coding across the $l$ symbols. Finally, an MLP module is used to encode each symbol. A similar architecture is used at the decoder, but the output dimensions are adjusted accordingly. 

For concreteness, in~\cite{ozfatura2022all}, GBAF considers a block size $K=3$ and codeblock length of $L=51$ and performs a block coding across $l = 17$ symbols. This is the present state-of-the-art in performance, achieving a BLER of $7 \times 10^{-10}$ at SNR $-1.0$ dB for rate $\sfrac{3}{9}$.

\subsection{Important open problems} 


While providing impressive performance and reliability, transformer architecture is computationally expensive. Further, it is well known that the transformer architecture does not scale well to larger sequences. The self-attention mechanism imposes a compute complexity of $O(n^2)$ during training and $O(n)$ during inference, even after using the KV cache, with respect to input length $n$. Moreover, as the blocklength scales, a memory complexity of $O(n^2)$ prohibits training the algorithm from using a large batch size, which is crucial for attaining a good performance for any deep-learning models. 

An alternative approach would be to design a symbol-by-symbol coding scheme that can be scaled to any blocklength $L$ by encoding $K$ bits at a time independently. Within this context, an important question is: what is the extent of performance degradation compared to neural block coding schemes like GBAF? Moreover, it is necessary to formulate a novel and simplified architecture tailored to symbol-by-symbol processing. Finally, through streamlining the architecture and confining to symbol-by-symbol schemes, can we analyze and interpret the codes learned by deep learning models, discerning the reasons behind their notably superior performance compared to analytical counterparts?

In the coming sections, we answer all these questions. In \secref{sec:deepcode}, we systematically present \lightcode, a lightweight symbol-by-symbol neural coding scheme, which achieves state-of-the-art BLER performance~(\secref{sec:results}). Further, we analyze GBAF to study the efficacy of block coding by performing a systematic ablation study and also analyze \lightcode to identify the crucial components necessary for achieving ultra-low BLER in~\secref{sec:analysis}.









\section{Proposed neural coding scheme: \lightcode}\label{sec:deepcode}


Our goal is to design lightweight neural codes without the need for block coding \textit{i.e.,} we limit ourselves to symbol-by-symbol schemes suitable for both noiseless and noisy feedback settings. To this end, we design a lightweight deep-learning-based scheme with $10 \times$ fewer parameters compared to existing schemes. Surprisingly, this low-complex solution achieves a performance superior to current state-of-the-art deep-learning-based block-coding schemes. We now present the architecture and training choices crucial to achieving this new state-of-the-art BLER performance.

\subsection{\lightcode: Architecture}\label{sec:arch} 




\textbf{Our architecture.} We split the design of the encoder-decoder architecture into two parts: the feature extractor and the multi-layer perception (MLP) module, as illustrated in~\figref{fig:simple_arch} (right). The choice of the feature extractor plays a crucial role in determining the downstream performance. A higher complexity feature extractor can perform better but might not be desirable for practical applications. For \lightcode, after experimenting with various choices, we found that the design illustrated in~\figref{fig:feature_extractor} gave the optimal trade-off in complexity vs. BLER performance.
Further, the output of the feature extractor will be passed to an MLP module.
For the encoder, we choose a $1$ layer MLP to project the features to a $1$ dimensional output. For the decoder, we choose a $2$ layer MLP to transform the features into an output of dimension $2^K$. The full architecture for encoder and decoder, including the MLP are provided in~\appref{app:arch},~\figref{fig:encoder_decoder_arch}.

\medskip

\textbf{Our architecture vs. RNNs.} A popular choice for modeling the cross-round relation in feedback coding has been the RNN family of architectures. The sequential nature of the data makes it suitable for RNNs, GRUs, LSTMs, and other similar architectures. While these architectures provided impressive performance, a significant drawback is the necessity to store the hidden states from previous encoding steps, which is a high-dimensional latent that requires a lot of memory. Hence, a feed-forward architecture, which does not need a hidden state, is a better choice for resource-constrained scenarios. 

\medskip

\textbf{Our architecture vs. transformers.} While transformer models overcome the issue of storing hidden states by using positional encoding and self-attention, they also come with great computational complexity. Moreover, recent transformer-based feedback schemes proposed block coding using self-attention, which requires $O(n^2)$ memory and compute with respect to codeblock length. But as evident from results~\secref{sec:results}, block coding does not seem to provide noticeable gains in the setting under consideration. Hence, we propose a symbol-by-symbol coding scheme that uses a significantly simpler feed-forward architecture.



{\em Complexity.\ }By eliminating the need for block coding and using a short code length, \lightcode avoids the high-complexity transformer module used in GBAF architecture and instead uses a simple feed-forward network.
By carefully designing the architecture suitable for a symbol-by-symbol scheme, we achieve a lightweight design with more than $10\times$ reduction in the number of parameters compared to the RNN family of schemes such as Robust Coding and transformer family of block coding schemes such as GBAF. Further, we show in~\secref{sec:comp_throughput} that this reduction in complexity provides up to $171 \times$ higher decoding throughput compared to Robust Coding and up to $10\times$ higher throughput compared to GBAF. 

{\em Feature extractor.\ }{\figref{fig:feature_extractor} depicts the architecture of the feature extractor, which is the backbone for both encoder and decoder models. Compared to the feature extractor used in GBAF, we introduce two changes to improve the performance and reduce the complexity. The first is to add a skip connection, which preserves the prior from input to the feature extractor better, similar to the design of \deeppolar~\cite{hebbar2024deeppolar}. Further, the hidden dimension is reduced from $64$ to $32$, decreasing the number of parameters.}


Specifically, the input to the feature extractor is processed through a sequence of three linear layers, each with a hidden dimension of $32$, and rectified linear unit (ReLU) activation functions are applied between the first two layers. In parallel, we add a skip connection from output of first layer to input of final layer, which 
reverses the sign of the representation to introduce variance in the information.  
%
%
%
%
%
%
%
%
%
These outputs are concatenated and passed through the final layer, which generates a 16-dimensional feature representation. 




{\em Training.\ }Short code length and lightweight architecture allow for training of \lightcode with an extremely large batch size of $10^5$, which is more than $10 \times$ larger than the maximum batch size suitable for GBAF codes. We refer to~\secref{sec:training} for a detailed discussion of the training details and~\secref{sec:ablation} for an ablation study on the role of batch size.

We now describe the detailed encoding and decoding procedure below.


\medskip

\textbf{Encoding.} At round $i$, the encoder takes as input the original message and any available feedback from previous $i-1$ rounds to compute $x_i$. Further, after every round, a power reallocation is done across the rounds, similar to Deepcode~\cite{kim2018deepcode}. This is also based on the theoretical justification that allocating more power to the initial rounds results in an optimal performance~\cite{gallager2009variations}. The resulting encoding process at round $i$ is
\begin{equation}
    x_i = \alpha_i \phi(\mathbf{u}, \Tilde{y}_1, \Tilde{y}_2, \dots, \Tilde{y}_{i-1}, 0, \dots, 0),
\end{equation}
where $ \Tilde{y}_{i} = x_i + n_i + \Tilde{n}_i$ is the feedback from the previous round and $\alpha_i$ is the scaling factor to ensure sum power constraint across the rounds $\sum_{i=1}^D \alpha_i^2 = D$. Here, we note that in order to keep the size of the input to the encoder constant, the input is padded with zeros where necessary.


\medskip

\textbf{Decoding.} At the end of $D$ rounds of transmission, the decoder uses all the received symbols and estimates the original message $\hat{\mathbf{u}}$ as
\begin{equation}
    \hat{\mathbf{u}} = \psi(y_1, y_2, \dots, y_D),
\end{equation}
where $y_i = x_i + n_i$ is the noisy received symbol in round $i$.

Here, $\phi$ and $\psi$ are the encoder and decoder neural networks based on the architecture in~\figref{fig:simple_arch}. Using this simple symbol-by-symbol scheme, \lightcode archives a performance similar to that of GBAF. It turns out that the computationally intensive self-attention mechanism and the cross-symbol coding are not adding considerable value to GBAF, as will be evidenced by our ablation studies in~\secref{sec:ablation}.

\medskip

\textbf{Choice of feature vectors.} Empirically, we observed that the choice of inputs to the feature extractor has a noticeable effect on the BLER performance based on the feedback channel. For noiseless feedback, the input features $(\mathbf{u}, y_1, y_2, \dots, y_i)$ worked the best. Further, as will be evident from discussions in~\secref{sec:noisy}, the input features  $(\mathbf{u}, x_1, x_2, \dots, x_i, n_1+\Tilde{n}_1, n_2+\Tilde{n}_2, \dots n_i+\Tilde{n}_i)$ gave the best performance for noisy feedback. Apart from this difference in features, the rest of the architecture remains the same for noiseless and noisy feedback scenarios. 
\begin{figure}[tb]
    \centering
 	\includegraphics[width=\linewidth]{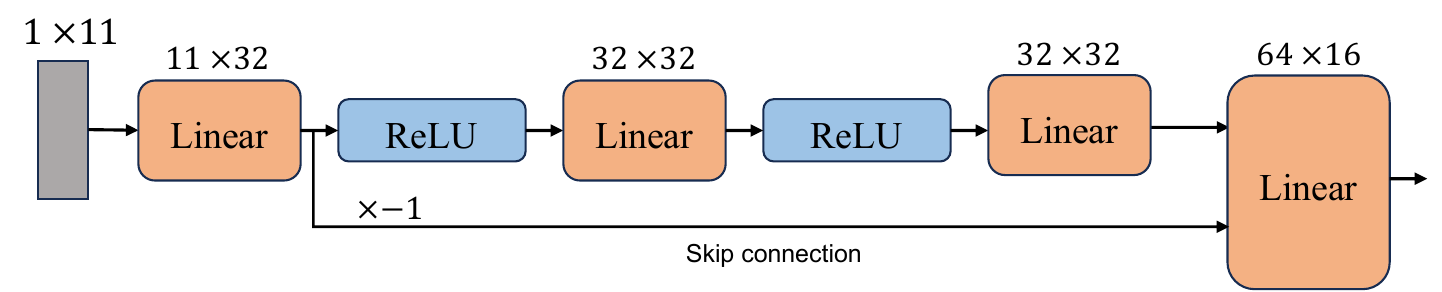}
 	\captionsetup{font=small}
 	\caption{Feature extractor design for \lightcode for a rate $\sfrac{3}{9}$ code.}
    \label{fig:feature_extractor}
\end{figure}

\medskip

\textbf{Support for multiple rates.} One of the limitations of existing symbol-by-symbol coding schemes is the inability to serve a variety of rates \textit{i.e.,} only rates of the form $\sfrac{1}{D}, D\in \mathbb{Z^+}$ are supported. In order to overcome this, instead of processing 1 bit at a time, \lightcode encodes a block of $K$ bits into $1$ symbol and communicates this symbol to the receiver over $D$ rounds. By independently varying $K$ and $D$ any rate of the form $\sfrac{K}{D}$ where, $K,D \in \mathbb{Z^+}$ can be supported, making the scheme more flexible, while simultaneously reducing the overall latency of the communication by a factor of $K$. The results for multiple rates for a block length of $K=3$ are discussed in~\tabref{tab:multi_rate}.

\subsection{Training}\label{sec:training}
Our primary region of interest is rate $\sfrac{3}{9}$ code at SNR $-1.0$ dB, where our target BLER is $\approx10^{-9}$. This is the current state-of-the-art performance by GBAF~\cite{ozfatura2022all}. To achieve such an extremely low error rate, it is important to train and simulate large amounts of samples reliably. Inspired by the high-SNR variant of \gn \cite{gallager2009variations}, we hypothesize that in regions of high-SNR or low errors, a significant benefit to the forward transmission arises from the power reallocation to symbols with non-zero errors. This can only be realized during the training by considering a large batch and enforcing the power constraint per batch.
Moreover, it is well understood that deep-learning models generalize better when trained with large batch sizes. Hence, we consider an extremely large batch size of $10^5$. While this is significantly larger than GBAF, which has a batch size of 8192, it is still a smaller number of symbols, considering that GBAF contains 17 symbols per codeblock. In contrast, our scheme consists of only $1$ symbol per codeblock. Further, it is possible to train with such a large batch size because of the small number of parameters compared to other deep-learning-based coding schemes. For a fair comparison, we follow a training methodology similar to that of GBAF, as explained below.  

\RestyleAlgo{ruled}

\SetKwComment{Comment}{/* }{ */}
\SetKwInput{KwInput}{Input}
\SetKwInput{KwOutput}{Output}
\SetKwInput{KwData}{Data}
\SetKwInput{KwIn}{In}
\SetKwInput{KwOut}{Out}

\begin{algorithm}[hbt!]
\caption{Training \lightcode}
\label{alg:lightcode}

\KwInput{Encoder model $\phi$, Decoder model $\psi$, Block length $K$, number of rounds $D$,  forward noise variance $\sigma_{ff}^2$, feedback noise variance $\sigma_{fb}^2$, batch size $B$, number of epochs $E$, learning rate $\texttt{lr}$ }
\vspace{5pt}
\For{$i \leq E$}{
    \vspace{5pt}
    Generate batch of random binary vectors $\mathbf{u} \in \{0,1\}^{K \times B}$\\
    \vspace{5pt}
    \For{$i \leq D$}{\Comment*[r]{Encoding at round $i$}
        \vspace{5pt}
        
        \If {$\sigma_{fb}^2 == 0$}{
            $x_i = \phi \left(\mathbf{u}, y_1, y_2, \dots, y_{i-1}, 0, \dots, 0 \right)$ }
        \vspace{5pt}
    
        \Else{
            $x_i = \phi \left(\mathbf{u}, x_1, \dots, x_i, n_1+\Tilde{n}_1, \dots n_i+\Tilde{n}_{i-1}, \dots, 0 \right)$ }
        $y_i = x_i + n_i$\vspace{5pt}} 
        
   $\mathbf{p_u} = \psi(y_1, y_2, \dots, y_D)$ \Comment*[r]{Decoding after $D$ rounds}
    \vspace{5pt}
    Compute the multi-class cross entropy loss $ \frac{1}{B} \sum_{j=1}^{B} \mathcal{L}_{\text{CE}}(\mathbf{c_j},\mathbf{p_{u_j}})$, $\mathbf{c_j}$ is the class index corresponding to $j^{\text{th}}$ message vector $\mathbf{u_j}$ and $\mathbf{p_{u_j}}$ is class probability vector after the \texttt{SoftMax} layer.\\
    \vspace{5pt}
    Clip the gradients to $0.5$.\\
    \vspace{5pt}
    Update model parameters for $\phi$ and $\psi$ using \texttt{AdamW} optimizer with learning rate \texttt{lr}.\\
    \vspace{5pt}
    Update the learning rate using \texttt{LambdaLR}. 
}

\end{algorithm}

We use \texttt{AdamW} optimizer, a stochastic optimization method that modifies the typical implementation of weight decay in \texttt{Adam} by decoupling weight decay from the gradient update. We initialize the learning to $10^{-3}$ and use a \texttt{LambdaLR} scheduler with a weight decay of $0.01$. Additionally, we clip all the gradient values to $0.5$ for numerical stability. Starting with randomly initialized weights, we jointly train the encoder and decoder models on $1.2\times10^5$ batches. Each input symbol corresponds to $K$ bits, resulting in a $2^K$ category classification problem for the decoder. Accordingly, we use the multi-class cross entropy (CE) loss to measure the performance as
\begin{equation*}
    \mathcal{L}_{\text{CE}} = \frac{1}{B} \sum_{i=1}^{B} \left(  \sum_{j=0}^{2^K - 1}  c_{ij} \log p_{ij} \right),
\end{equation*}
where $B$ is the batch size, $2^K$ is the number of classes , $c_{ij}$ is the true class probability and $p_{ij}$ is the predicted probability of the $j^{\text{th}}$ class, for the $i^{\text{th}}$ sample.


The hyperparameters used for training the rate $R = \sfrac{3}{9}$ code are listed in~\tabref{tab:hyperparameters}.

\begin{table}[!ht]
\centering

\begin{tabular}{lc}
\hline
\textbf{Hyperparameter} & \textbf{Value} \\
\hline
Encoder training SNR & -1.0 dB \\
Decoder training SNR & -1.0 dB \\
Mini batch size ($B$) & 100,000 \\
Total epochs ($E$) & 120 \\
Batches per epoch & 1000 \\
Optimizer & AdamW \\
Initial learning rate ($\texttt{lr}$) & $10^{-3}$ \\
Learning rate scheduler & LambdaLR \\
\hline
\end{tabular}
\captionsetup{font=small}
\caption{Hyperparameters for training rate $\sfrac{3}{9}$ \lightcode. }
\label{tab:hyperparameters}
\end{table}

Once the training is complete, we compute the mean power and standard deviation for the encoded data after every round for a large number of samples, $O(10^6)$, to reliably estimate the mean and standard deviation corresponding to encoder outputs to be used during inference. This is crucial in enforcing the power constraint in expectation. The algorithm for training \lightcode is described in detail in~\algref{alg:lightcode}.




\section{Main Results }\label{sec:results}

We begin by comparing the performance of \lightcode with the current state-of-the-art in deep-learning-based coding schemes for noiseless passive feedback setting, GBAF~\cite{ozfatura2022all} and other schemes, demonstrated in~\figref{fig:bler_3_9_noiseless}. Next, we discuss a method for extending \lightcode to moderate block-length regimes of up to $L=51$ and study the performance.  

Finally, we look at deep-learning-based coding schemes for noisy feedback settings and how \lightcode can be extended to this setting with minimal changes. We then proceed to compare the BLER performance for the same configuration as before but with a feedback SNR of $20$ dB.


\subsection{\lightcode and \pb vs. existing neural codes for noiseless feedback}\label{sec:comparison}

In this section, we evaluate the performance of \lightcode and \pb and compare them against several existing analytical and deep-learning feedback schemes. For concreteness, we consider the canonical setting of rate $R=\sfrac{3}{9}$ with block length $K=3$ and $D=9$ rounds of communication on AWGN forward channel and noiseless feedback.


\medskip

\textbf{Baselines.} Our primary comparison is against GBAF~\cite{ozfatura2022all}, which is the current state-of-the-art for the noiseless passive feedback setting. Additionally, we consider Deepcode~\cite{kim2018deepcode}, DEFC~\cite{safavi2021deep}, DRFC~\cite{mashhadi2021drf}, Attentioncode~\cite{shao2023attentioncode}, and Robust Coding~\cite{kim2018deepcode}. Further, for completeness as well as to understand the relative gains of deep-learning-based coding schemes, we also compare the performance against NR-LDPC~\cite{Huawei2017LDPC}, Schalkwijk-Kailath~\cite{schalkwijk1966coding_p1}, \gn~\cite{gallager2009variations} and \pb. 

\textbf{Results.} In~\figref{fig:bler_3_9_noiseless}, we compare the BLER performance of rate $\sfrac{3}{9}$ coding schemes. We train a pair of encoder-decoder models at each SNR point in the plot.
\lightcode consistently outperforms the existing deep-learning-based schemes, including GBAF, while utilizing $< \sfrac{1}{10}^{\text{th}}$ the number of parameters. Interestingly, these results indicate that \pb surpasses the performance of all existing schemes, including \lightcode, when an adequately high signal-to-noise ratio is provided, which is $-0.5$ dB for rate $\sfrac{3}{9}$. However, the performance of deep-learning codes is significantly better at lower SNRs. Here, we highlight that by utilizing the clean feedback, \lightcode and \pb achieve an extremely small error rate even when operating at $0.75$ dB below the channel capacity. Further, we note that because of the extremely short block lengths considered, it is hard to derive meaningful achievability or converse bounds, such as~\cite{polyanskiy2010channel}, where blocks lengths of $100$ or higher are considered.

The blocklengths considered for the baselines in~\figref{fig:bler_3_9_noiseless} are listed in~\tabref{tab:block_lengths}, where $K=50$ for some of the schemes, which is larger than the length $K=3$ considered for rest of the schemes including \lightcode. While the BLER comparison against the different blocklengths is unfair, to maintain consistency, we used the same methodology followed in the current state-of-the-art results, including GBAF~\cite{ozfatura2022all} and Robust Coding~\cite{kim2023robust}. For a fairer comparison, we propose a modular approach for extending \lightcode to larger blocklengths in~\secref{sec:long_length}.

\begin{table}[!ht]
\centering
\begin{tabular}{lc}
\hline
\textbf{Scheme} & \textbf{Message length $K$} \\
\hline
Deepcode & $50$ \\
DEFC & $50$ \\
DRFC & $50$ \\
AttentionCode & $50$ \\
\gn & $3$ \\
\pb & $3$ \\
Robust Coding & $3$ \\
GBAF & $3$\footnotemark[1] \\
\lightcode & $3$ \\    
\hline
\end{tabular}
\captionsetup{font=small}
\caption{Block lengths of different coding schemes.}
\label{tab:block_lengths}
\end{table}


\footnotetext[1]2{While GBAF uses a blocklength of $51$, the BLER in~\cite{ozfatura2022all} was reported for the sub-block of length $3$.}


\begin{figure}[!htb]
    \centering
 	\includegraphics[width=\linewidth]{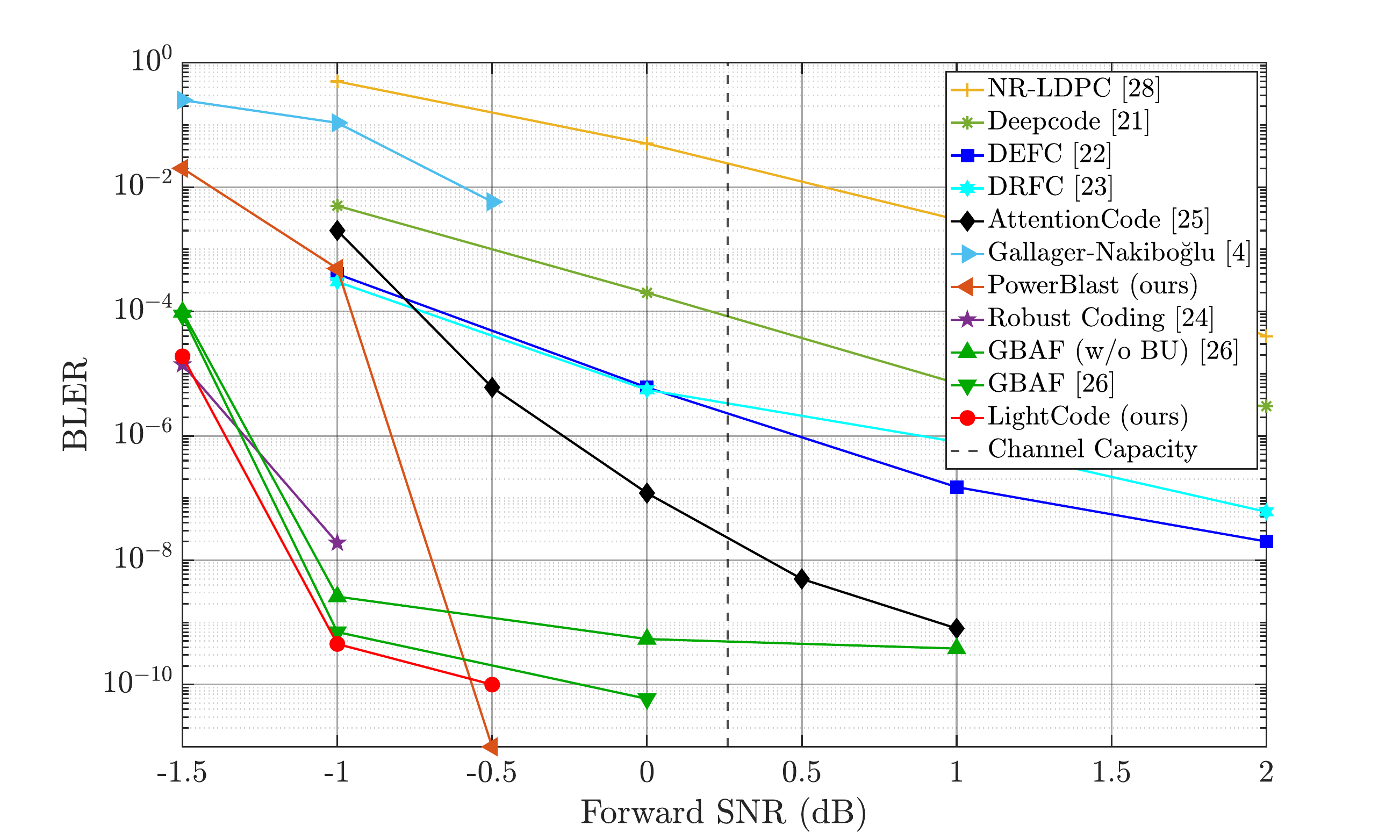}
 	\captionsetup{font=small}
 	\caption{Noiseless feedback: Performance comparison against existing classical and neural feedback codes for rate $\sfrac{3}{9}$. \pb achieves the best performance among existing classical schemes and performs comparable to state-of-the-art neural coding schemes in high-SNR regions. \lightcode achieves superior BLER performance compared to GBAF while utilizing $< \sfrac{1}{10}^{\text{th}}$ the number of parameters.}
 	\label{fig:bler_3_9_noiseless}
\end{figure}


\textbf{Multiple rates.} The coding rate of \lightcode is determined by two factors, block length $K$ and number of rounds of communication $D$. We keep the block length constant and vary $D$ to compare the performance against GBAF across multiple rates, as shown in~\tabref{tab:multi_rate}. \lightcode consistently outperforms GBAF across a range of rates $\{ \sfrac{3}{9}, \sfrac{3}{8}, \sfrac{3}{7}, \sfrac{3}{6}, \sfrac{3}{5}\}$, while utilizing a fraction of the number of parameters. 

\begin{table}[!htp]
\centering
\begin{tabular}{ccccc}
\toprule
\textbf{SNR (dB)} & \textbf{Rate} & \textbf{GBAF} & \textbf{\pb} & \textbf{\lightcode} \\
\midrule
$-1.0$ & $3/9$ & $7\times 10^{-10}$ & $2.8\times 10^{-4}$ & $\mathbf{4.5\times 10^{-10}}$ \\
$0.0$ & $3/8$ & $6.1\times 10^{-8}$ & $8.8\times 10^{-7}$ & $\mathbf{5.1\times 10^{-9}}$ \\
$1.0$ & $3/7$ & $7.5\times 10^{-8}$ & $1.0\times 10^{-8}$ & $\mathbf{1.0\times 10^{-8}}$ \\
$2.0$ & $3/6$ & $1.5\times 10^{-6}$ & $1.5\times 10^{-8}$ & $\mathbf{8.3\times 10^{-7}}$ \\
$3.0$ & $3/5$ & $8.7\times 10^{-7}$ & $1.9\times 10^{-4}$ & $\mathbf{2.7\times 10^{-7}}$ \\
\bottomrule
\end{tabular}
\captionsetup{font=small}
\caption{BLER performance comparison of \lightcode with GBAF with \pb for different rates. \lightcode consistently performs better than GBAF while using $< \sfrac{1}{10}^{\text{th}}$ the number of parameters.}
\label{tab:multi_rate}
\vspace{-1em}
\end{table}


\textbf{Error floor.} As observed in~\figref{fig:bler_3_9_noiseless}, \lightcode exhibits an error floor in the high-SNR region, similar to the existing deep-learning-based coding schemes such as Deepcode and GBAF. We believe a key reason for this behavior is the difficulty of training the decoder in the high SNR region, where the error events are extremely rare. For instance, while training in the regime of BLER $10^{-9}$, most of the training batches (batch size $10^5$) are error-free and do not provide sufficient guidance for the gradient descent to learn useful information, leading to saturation in performance.





\subsection{Performance in moderate blocklength regime}\label{sec:long_length}


 We now present a modular way to scale \lightcode to larger block lengths to enable a fair comparison against baselines with longer blocklengths. As a result of the curse of dimensionality, it is well known that the hardness of learning a code increases considerably with an increase in block length. As seen from results in Appendix~\secref{app:long_lightcode}, directly increasing the blocklength for \lightcode results in a poor BLER performance.
GBAF~\cite{ozfatura2022all} uses a computationally intensive transformer for performing block coding to support a length of $51$. To keep the complexity low, \lightcode instead uses a short blocklength $K$ (e.g., $K=3$) and treats a block of $L$ bits as $l = \sfrac{L}{K}$ independent sub-blocks. This modular approach of encoding $K$ bits at a time is motivated by ablation studies that demonstrate the negligible benefit of block coding in GBAF, presented in detail in~\secref{sec:ablation}. Thus, the resulting BLER for a blocklength $L$ \lightcode can be computed as
\begin{equation}
    p_L = 1 - (1-p_K)^l,
\label{eq:bler_modulo}
\end{equation}
where $p_L$ is the BLER for blocklength $L$ and $p_K$ is the BLER for blocklength $K$.

Finally, for a fair comparison against schemes with block length $K=50$, we refer to the results in~\figref{fig:bler_3_9_noiseless_51} where we compute the BLER for blocklength $L=51$ for \lightcode using the modular approach presented above. At a forward SNR of $-1.0$ dB, \lightcode with blocklength $51$ has a BLER of $7 \times 10^{-9}$ which is significantly smaller than Deepcode, DEFC, DRFC, and AttentionCode that use a block length of $50$, demonstrating the superior performance of \lightcode even at moderately longer blocklengths. We note that the BLER $p_L$ approaches $1$ for very large values of $L$, but we restrict our study to the relatively short block length regime of $L<300$, where the performance is still superior to the baselines as demonstrated in~\figref{fig:bler_3_9_noiseless_51}. Further potential performance improvements at longer blocklengths may be achievable through a concatenated coding scheme that employs an outer block code, as explored in~\cite{chance2011concatenated}; we leave this as a direction for future work. 

\begin{figure}[t]
    \centering
 	\includegraphics[width=\linewidth]{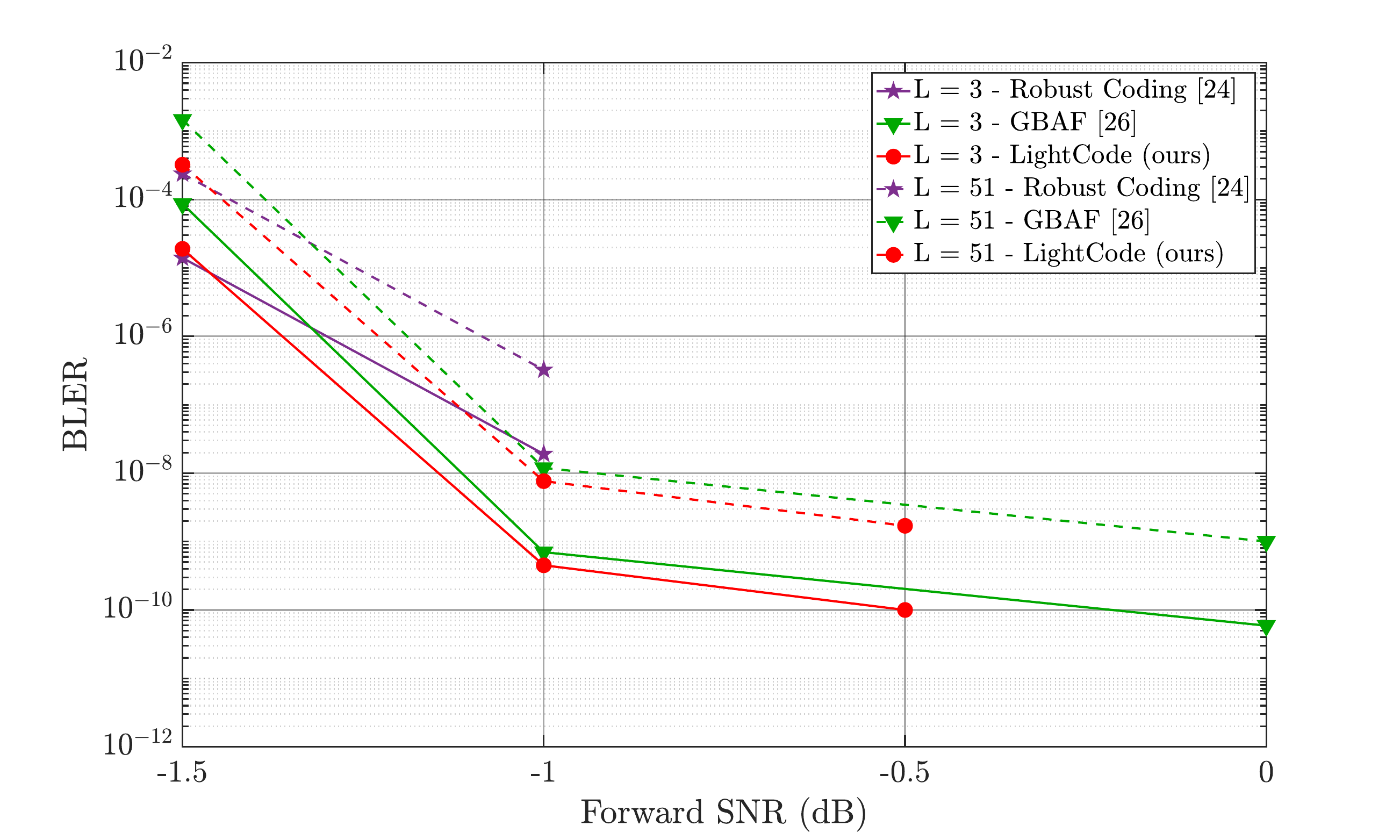}
 	\captionsetup{font=small}
 	\caption{By independently encoding sub-blocks of length $K=3$, \lightcode provides a flexible way to encode for any blocklength $L$, providing a steady trade-off in BLER performance vs blocklength. Even at a block length of $50$, \lightcode significantly outperforms all baseline methods shown in~\figref{fig:bler_3_9_noiseless},  which utilize a maximum block length of $51$.}
 	
\label{fig:bler_3_9_noiseless_51}
\end{figure}


\subsection{Coding for Channels with Noisy Feedback}\label{sec:noisy}
Finally, we now consider the case of channels with noisy feedback \textit{i.e.,} $\sigma_{fb}^2 > 0$. While the assumption of noiseless feedback is easier to study, most practical feedback channels suffer from noise even when the receiver sending the feedback operates at high power. Because of this limitation, neither the SK nor GN schemes perform well. To address this,~\cite{chance2011concatenated} introduces a linear feedback scheme that is implemented as an inner code to a concatenated code, which was found to be asymptotically optimal within the linear family of codes under AWGN forward channel~\cite{agrawal2011iteratively}. More recently,~\cite{mishra2023linear} proposed using dynamic programming to improve the performance, which turns out to be a generalized version of SK. However, despite these improvements, the linearity of these schemes severely limits the performance that can be achieved.

Recalling the architecture of deep-learning-based schemes introduced in~\secref{sec:neural_codes}, one of the interesting properties of these codes is their robustness to noise in the feedback channel. By taking advantage of the general architecture, injecting noise into the feedback channel during training is straightforward to make the encoder-decoder robust. Hence, learning-based schemes can help design a practically realizable class of codes that can be trained in a data-driven fashion and can provide gains in noiseless and noisy feedback settings.

\begin{figure}[t]
    \centering
 	\includegraphics[width=\linewidth]{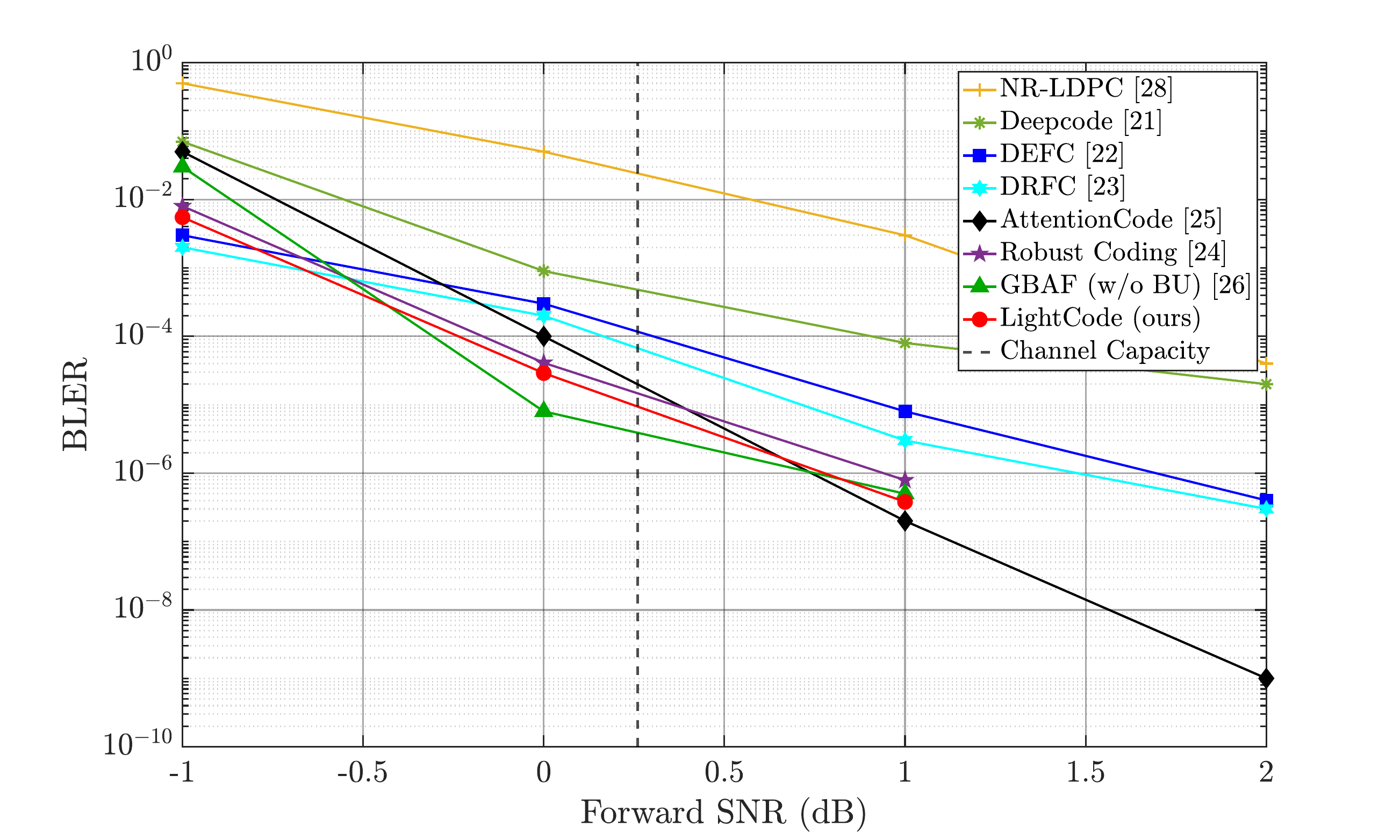}
 	\captionsetup{font=small}
 	\caption{\textcolor{black}{Noisy feedback: Performance comparison against existing neural feedback codes for rate $\sfrac{3}{9}$ and feedback SNR = $20$ dB. \lightcode achieves BLER performance comparable to GBAF while utilizing only $\sfrac{1}{10}^{\text{th}}$ the number of parameters.}}
 	\label{fig:bler_3_9_noisy}
\end{figure}

For training \lightcode on noisy feedback channels, empirically, we found that selecting the input features as the previous encoder outputs $x_1$ and the cumulative noise $n_i + \Tilde{n}_i$ separately works better than directly passing the feedback $\Tilde{y_i}$ from previous rounds, which was the choice for noiseless feedback. Except for the change in input feature, the rest of the architecture and training details remain the same as the noiseless feedback setting. In~\figref{fig:bler_3_9_noisy}, we compare the performance of rate $\sfrac{3}{9}$ \lightcode against existing schemes, for a feedback SNR of $20$ dB. We see that \lightcode exhibits similar performance compared to GBAF while still utilizing only a fraction of the number of parameters.

\section{Analysis}\label{sec:analysis}
%

As evident from results in~\secref{sec:results}, symbol-by-symbol neural coding schemes such as \lightcode can achieve performance comparable to block coding schemes such as GBAF. Consequently, we examine GBAF and \lightcode in greater detail to systematically analyze their architecture, training process, and performance, with the aim of identifying the key factors contributing to their performance. Furthermore, we compare the memory and computational complexity of \lightcode with existing schemes to quantify the gains. Finally, we provide an interpretation of the encoded representations of \lightcode.

\subsection{Ablation studies on GBAF}\label{sec:ablation}

%

We investigate the contribution of the self-attention mechanism to the performance of GBAF by performing an ablation study to understand the compute vs performance trade-off better. First, the self-attention mechanism is disabled, and it is observed that this has no significant effect on the BLER performance. Next, both self-attention and positional encoding blocks are disabled, and the performance remains approximately the same. These results, provided in~\tabref{tab:gbaf_ablation}, demonstrate a surprising observation that the self-attention and PE modules, which are responsible for cross-symbol block coding, contribute only marginally to the performance of GBAF. This brings into question the value of block encoding and motivates us to find simpler designs that perform only symbol-by-symbol coding while providing competing performance. 

\begin{table}[!htp]
\centering
\begin{tabular}{ccccc}
\toprule
\textbf{SNR (dB)} & \textbf{GBAF} & \textbf{GBAF (no attn)} & \textbf{GBAF (no attn, no PE)} \\
\midrule
-1.5 & 9.8e-5 & 7.5e-5 & 2.6e-5 \\
-1.0 & 2.6e-9 & 1.2e-9 & 2.7e-9 \\
 ~0.0 & 5.4e-10 & 6.6e-10 & 8.1e-10 \\

\bottomrule
\end{tabular}
\captionsetup{font=small}
\caption{Effect of block coding on performance of GBAF for rate $\sfrac{3}{9}$ code with noiseless feedback. Positional encoding and self-attention modules have no noticeable effect on BLER performance.}
\label{tab:gbaf_ablation}
\vspace{-1em}
\end{table}

\subsection{Scaling laws: batch size}\label{sec:scaling}
In section \secref{sec:results}, we have demonstrated that it is possible to achieve performance comparable to GBAF with fewer parameters and lower compute complexity. We also hypothesized in \secref{sec:training} that it is important to train the encoder using a very large batch size to accurately capture the statistics of the distribution so that the available power can be optimally allocated to the symbols with error at the receiver while reducing the power to the remaining symbols in a batch. In deep learning literature, it is well known that increasing the batch size can noticeably improve the performance of the neural network model~\cite{bahri2021explaining}. To better understand the effect of batch size on \lightcode, we perform a systematic study by training \lightcode with different batch sizes. As noted in \figref{fig:deepcode_scaling}, at a batch size of $1.5\times10^3$, \lightcode has a BLER of $3.7\times10^{-9}$, similar to that of GBAF (w/o BU). However, the BLER drops to $4\times10^{-10}$, outperforming GBAF when the batch size increases to $5\times10^4$ and beyond. We would like to note that for each batch size, hyperparameters such as learning rate have been optimized, and the model is trained until saturation to ensure the best possible performance.

\begin{remark} \textnormal{ Using a lightweight network with a small number of parameters makes it feasible to train with a very large batch size, resulting in a significantly lower error rate. }

\begin{figure}[!htb]
    \centering
 	\includegraphics[width=\linewidth]{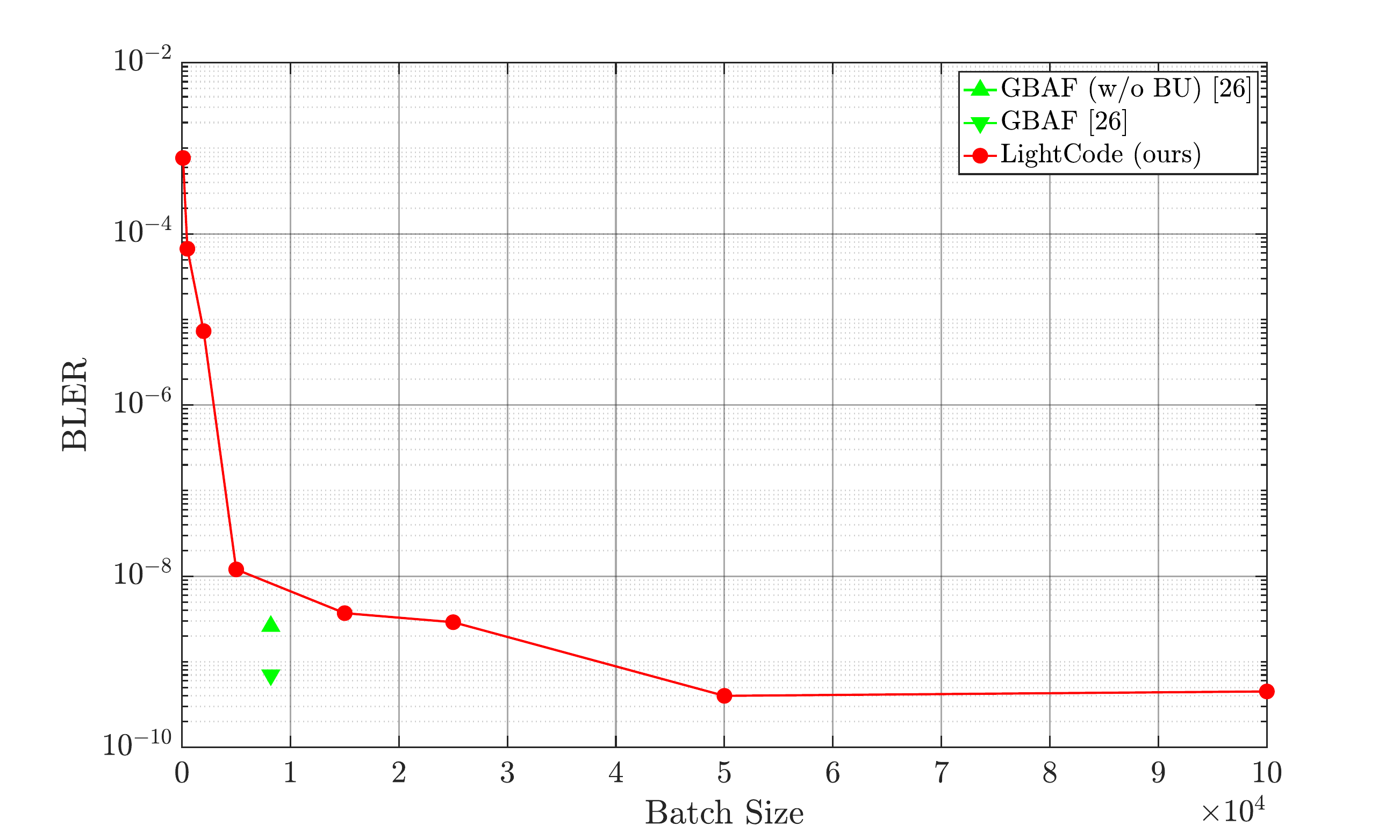}
 	\captionsetup{font=small}
 	\caption{BLER performance of \lightcode with with respect to training batch size for rate $\sfrac{3}{9}$ code at SNR -1.0 dB with noiseless feedback. Performance improves significantly with respect to batch size, surpassing the performance of GBAF at batch size $ > 5\times 10^4$.}
 	\label{fig:deepcode_scaling}
\end{figure}

\end{remark}

\subsection{Complexity and Throughput}\label{sec:comp_throughput}

We compare the total number of parameters in the encoder and decoder for a rate $\sfrac{3}{9}$ \lightcode against GBAF and Robust Coding within~\tabref{tab:num_params}. \lightcode reduces the parameter count by more than $10 \times$ compared to GBAF and Robust Coding, resulting in notable savings in memory.

\begin{table}[!ht]
\centering
\begin{tabular}{lc}
\hline
\textbf{Scheme} & \textbf{Total \# parameters} \\
\hline
GBAF~\cite{ozfatura2022all} & $9.1 \times 10^4$ \\
Robust Coding~\cite{kim2023robust} & $8.6 \times 10^4$ \\
\lightcode (ours) & $\mathbf{7.3 \times 10^3}$ \\

\hline
\end{tabular}
\captionsetup{font=small}
\caption{Rate $\sfrac{3}{9}$ \lightcode requires $< \sfrac{1}{10}^{\text{th}}$ the number of parameters compared to GBAF and Robust Coding.}
\label{tab:num_params}
\vspace{-1em}
\end{table}

\begin{table*}[!ht]
\centering

\begin{tabular}{lcccc}
\hline
\textbf{Scheme} & \textbf{ Enc (CPU)} & \textbf{ Dec (CPU)} & \textbf{ Enc (GPU)} & \textbf{ Dec (GPU)} \\
\hline
GBAF~\cite{ozfatura2022all} & $ 1.6 \times 10^5 $ & $ 1.7 \times 10^6 $ & $ 4.4 \times 10^8 $ & $ 3.3 \times 10^9 $\\

Robust Coding~\cite{kim2023robust} & $ 2.0 \times 10^5 $ & $  7.6 \times 10^4 $ & $ 4.1 \times 10^8 $ & $ 2.4 \times 10^9  $\\

\lightcode (ours) & $\mathbf{1.5 \times 10^6}$  & $\mathbf{1.3 \times 10^7 }$ & $\mathbf{2.9 \times 10^9}$  & $\mathbf{3.1 \times 10^{10} }$\\

\hline
\end{tabular}
\captionsetup{font=small}
\caption{Throughput (symbols/sec) comparison. Rate $\sfrac{3}{9}$ \lightcode achieves up to $ 10 \times $ higher decoding throughput compared to GBAF and up to $171 \times$ higher decoding throughput in CPU mode compared to Robust Coding.}
\label{tab:inference}
\end{table*}





To compare the throughputs, we measure the inference time. We define the encoder throughput $T_E$ as the number of message blocks or symbols encoded per second and the decoder throughput $T_D$ as the number of message blocks or symbols decoded per second. Formally, for a rate $\sfrac{K}{D}$ code, the throughput can be defined as 
\begin{align}
    T_{EK} &= (\sfrac{K}{D}) T_E \hspace{2em} (\text{bits/sec}),\\ 
    T_{DK} &= (\sfrac{K}{D}) T_D \hspace{2em} (\text{bits/sec}).
\end{align}
We measure the throughput in CPU mode on a \texttt{AMD Ryzen Threadripper PRO 5975WX 32-Cores} processor and in GPU mode using an \texttt{NVIDIA GeForce RTX 4090}, using a batch size of $10^5 - 10^7$. \lightcode provides up to $10 \times$ higher encoding and decoding throughput than GBAF. Further, \lightcode provides up to $171 \times$ higher decoding throughput in CPU mode and up to $10 \times$ higher throughput in GPU mode compared to Robust Coding, providing significant gains in latency, as shown in~\tabref{tab:inference}.

We note here that comparing throughputs for classical schemes against deep learning-based schemes is not straightforward and depends heavily on the implementation. Our implementation of \lightcode uses \texttt{PyTorch} libraries, whereas \pb and other classical schemes were implemented using \texttt{NumPy} libraries. Further, a significant computational bottleneck for the classical schemes is the necessity for demodulation after each round, the speed of which is heavily influenced by the implementation methodology. Hence, we restrict our comparison to the family of deep learning codes, where the latency primarily originates from the architecture complexity and the total number of parameters, and a fair comparison is feasible. 
However, it is clear that the total number of numerical operations in classical schemes is significantly lower than that of deep-learning-based schemes, and we acknowledge that the throughput of classical schemes can be significantly higher.

\medskip

\textbf{Complexity vs. BLER trade-off:} While analytically quantifying the complexity of deep learning-based channel coding schemes with respect to target BLER would be very interesting, it is highly non-trivial and a widely open problem. Thus, we instead conduct an empirical study to analyze how the complexity of \lightcode scales with target BLER. Empirically, we observed that reducing the complexity of the decoder while maintaining the same complexity for the encoder provided the best trade-off in performance vs complexity. In this experiment, we vary the target BLER and empirically find the required encoder-decoder dimensions for a rate $\sfrac{3}{9}$ code at a forward SNR of $-1.0$ dB and a noiseless feedback channel. The results for the same are shown in~\tabref{tab:complexity}, demonstrating an almost linear degradation of BLER in log scale with the number of parameters.

\begin{table}[H]
\centering
\begin{tabular}{lccc}
\hline
\textbf{BLER} & Enc dimension & Dec dimension & \textbf{Total \# params} \\
\hline
$4.5 \times 10^{-10}$ & 32 & 32 &$7.3 \times 10^3$ \\
$3.4 \times 10^{-9}$ & 32 & 16 &$4.7 \times 10^3$ \\
$5.2 \times 10^{-9}$ & 32 & 12 &$4.3 \times 10^3$ \\
$2.8 \times 10^{-8}$ & 32 & 8 &$3.9 \times 10^3$ \\

\hline
\end{tabular}
\captionsetup{font=small}
\caption{Complexity vs BLER trade-off for rate $\sfrac{3}{9}$ code at a forward SNR of $-1.0$ dB and noiseless feedback. BLER (in log scale) degrades almost linearly with a decrease in the number of parameters.}\looseness=-1
\label{tab:complexity}
\end{table}

\subsection{Interpretation of \lightcode}\label{sec:interpret}

By independently encoding sub-blocks of length $K=3$ and limiting to a symbol-by-symbol strategy, \lightcode allows for better interpretability and analysis of the learned encoder representations compared to block coding schemes such as GBAF. By analyzing the power allocation and relation between encoder output and feedback from previous rounds, we draw connections between \lightcode and \pb. 
\subsubsection{Power distribution}
A key contributing factor to the superior performance of \pb compared to SK is the discrete-symbol scheme in the final round of communication. In the high-SNR regime, we are only interested in the error in the PAM index of the decoded symbol with respect to the original symbol; this will result in a sparse distribution where most of the samples are 0. Thus, a majority of the available power is naturally allocated to the symbol locations with non-zero error. Surprisingly, we find similar behavior for \lightcode towards the final rounds of communication where the error is sparse. 

To test this hypothesis, we choose a moderately sparse error regime for ease of analysis. In~\figref{fig:power_dist}, we plot the power distribution of the encoder output in round $7$ for rate $\sfrac{3}{9}$ code at SNR $-1.0$ dB and noiseless feedback by randomly sampling $50$ symbols. On the X-axis, we see the sample number, and on the Y-axis, we plot the magnitude of error in the integer PAM index of the estimate after round $6$ and compare it against the magnitude of encoder output in round $7$. Note that a difference in index of 1 corresponds to a magnitude of 2 in the un-normalized PAM from Eqn.~\ref{eqn:pam}. It is evident from~\figref{fig:power_dist} that the highest power is allocated to the symbols with error in the estimate.


\begin{figure}[!htb]
    \centering
 	\includegraphics[width=\linewidth]{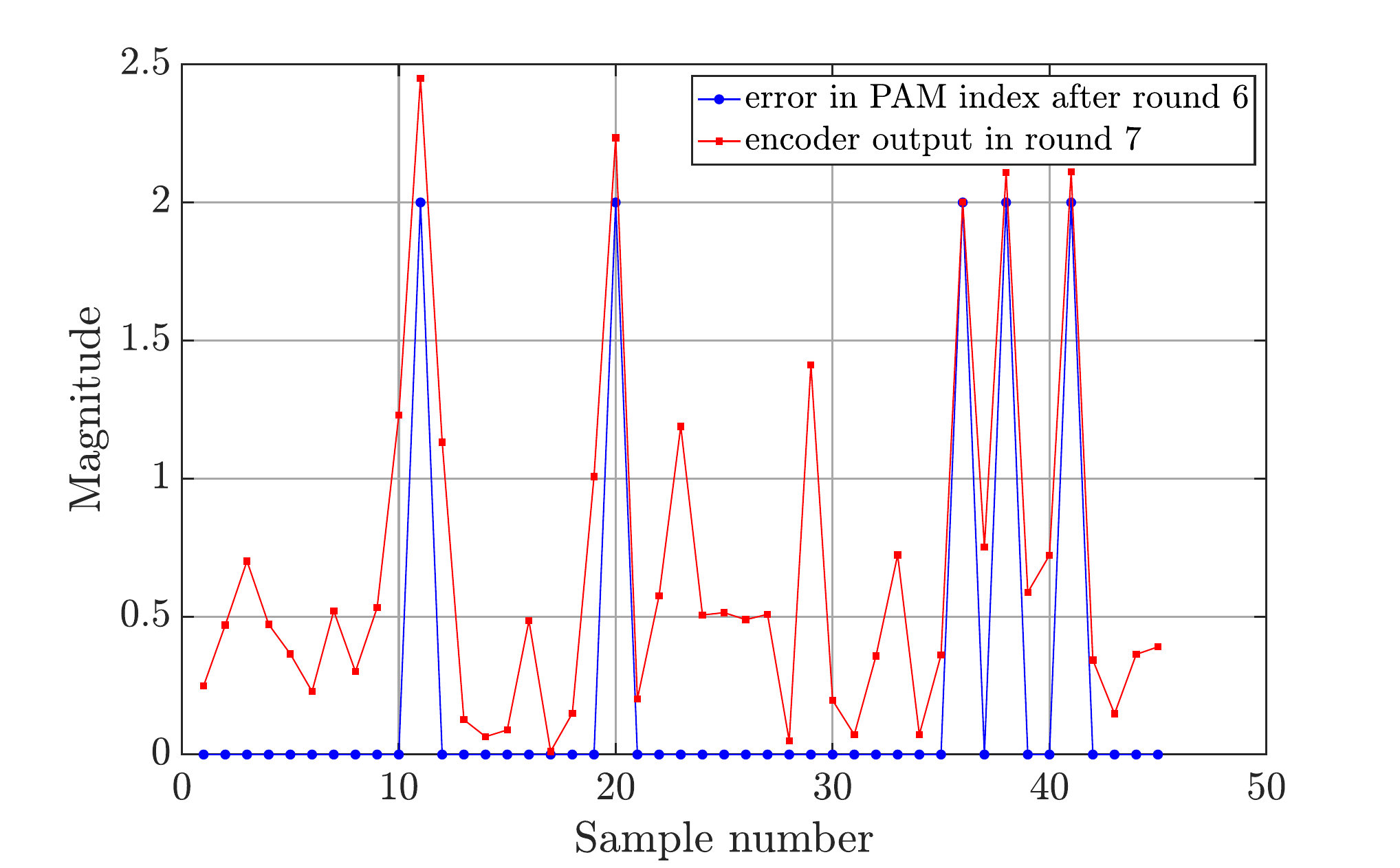}
 	\captionsetup{font=small}
 	\caption{\lightcode allocates more power to the symbols with error in estimate from the previous round, improving the overall probability of decoding.}
 	\label{fig:power_dist}
\end{figure}

\begin{figure}[!htb]
    \centering
 	\includegraphics[width=\linewidth]{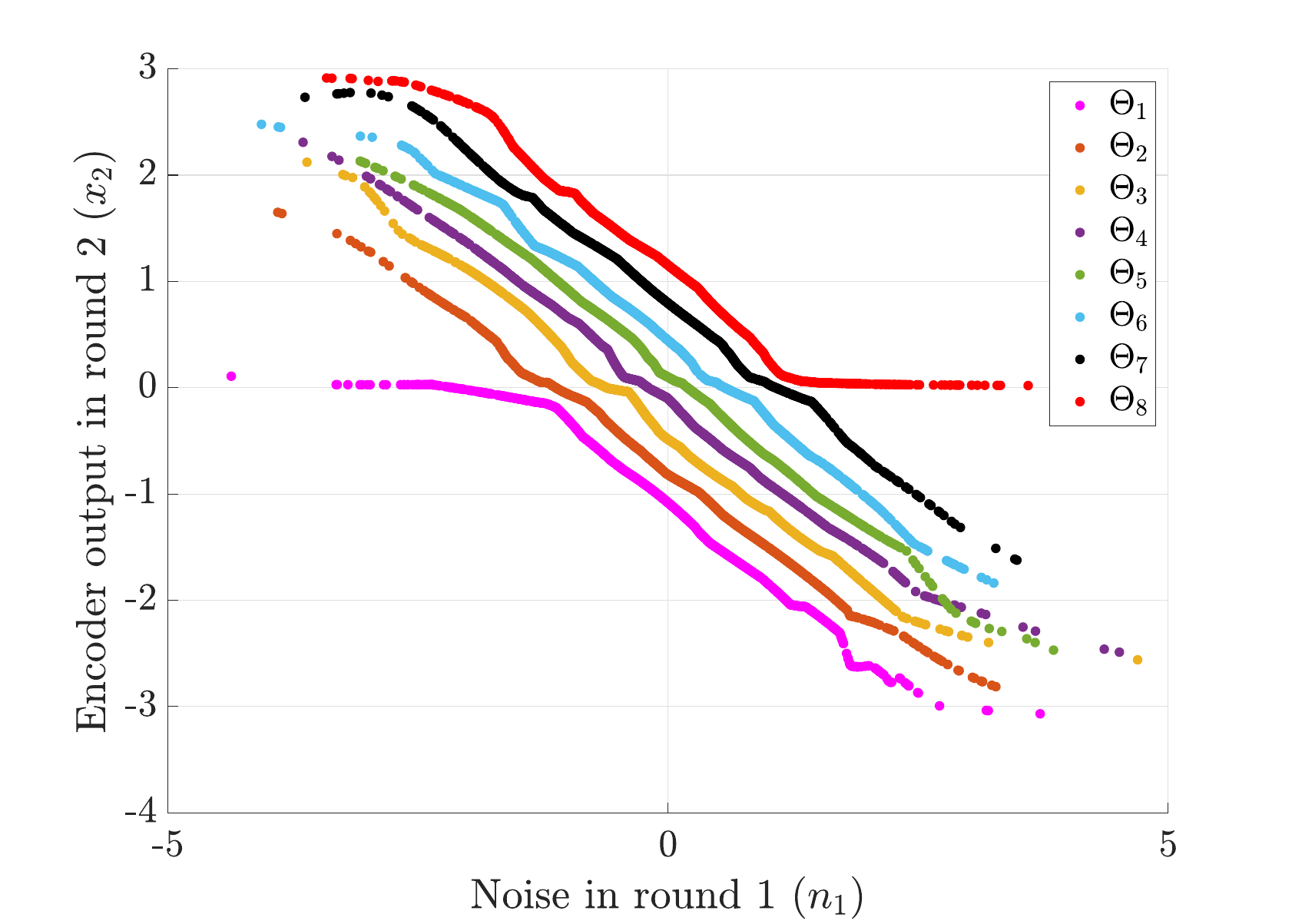}
 	\captionsetup{font=small}
 	\caption{Output of encoder in round 2 ($x_2$) shows that the encoder is directly compensating for the noise experienced in round 1 ($n_1$). }
 	\label{fig:n1_vs_p2}
\end{figure}

\begin{figure*}[!htb]
    \centering
 	\includegraphics[width=\linewidth]{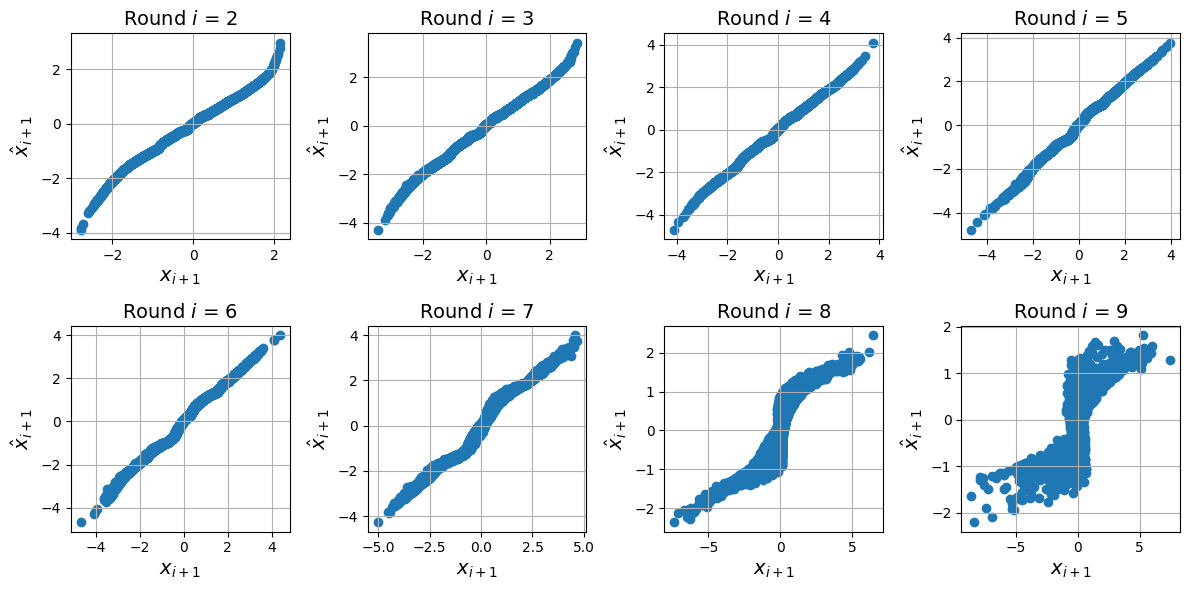}
 	\captionsetup{font=small}
 	\caption{Comparison of linear approximation (Y-axis) vs true encoder output (X-axis). As the rounds progress, the linearity of the relation between encoder output and the feedback from previous rounds breaks, making it harder for analytical schemes to perform well.}
 	\label{fig:linear_reg}
\end{figure*}

 \subsubsection{Interpreting the encoder} 
 In~\figref{fig:n1_vs_p2}, we plot the output of the encoder in round 2 with respect to the feedback in round 1. The encoder is approximately transmitting a linearly scaled version of noise from round 1. Interestingly, for the symbols on the boundary of the constellation embeddings, the encoder does not need to transmit any data when the noise favors the ground truth, unlike in the SK scheme, where all noise needs to be corrected. For instance, consider $\Theta_1$, mapped to the symbol at the boundary on the left, where a negative noise in round 1 is favored. And $\Theta_8$ is mapped to the symbol at the boundary on the right, where a positive noise in round 1 is favored. In such scenarios, the transmit power saved here can be reallocated to other symbols, improving the overall decoding error, which is only possible because of the non-linear activation functions.

Visualization of the encoder output beyond round 2 is difficult as the number of inputs to the encoder increases linearly with the number of rounds of communication. Alternatively, we test the linear dependency between $x_{i+1}$ and $\{x_1, \dots, x_i, n_1, \dots, n_i\}$ by performing a linear regression as
\begin{equation}
    \hat{x}_{i+1} = \sum_{j=1}^{i} \alpha_j x_j + \beta_j n_j + c,
\end{equation}
where $\alpha_j$ and $\beta_j$ are regression coefficients and $c$ is the intercept found using \texttt{LinearRegression} toolbox in \texttt{sklearn} over $10^6$ samples. To make the analysis tractable, we build one linear regression model for each symbol in the PAM constellation. In~\figref{fig:linear_reg}, we plot the predicted output using linear regression vs. true encoder output when the input PAM symbol index is $\Theta=1$, for rounds $j=2$ to $9$. It is evident that as the round number increases, the relation becomes highly non-linear, and hence, classical schemes such as SK and GN as other linear schemes~\cite{chance2011concatenated,mishra2023linear} fail to model these dependencies. 

We note that recent results in~\cite{devroye2022interpreting,zhou2024interpreting} show the possibility of using post-hoc interpretability techniques to analyze the deep-learning-based codes and learn analytical approximations for the learned codes. Making progress in this direction is an interesting avenue for future work. 

\section{Conclusion}\label{sec:conclusion}
In this work, we address the problem of designing lightweight coding schemes for channels with feedback. First, we propose an analytical scheme, \pb, that can be viewed as a combination of Schalkwijk-Kailath and \gn schemes. Using a hybrid strategy and taking advantage of the discrete nature of the signal in the final round, \pb noticeably outperforms both SK and GN, providing a performance that competes with current deep learning schemes in regions of high-SNR. 

Next, we propose a lightweight deep-learning-based scheme, \lightcode, that can achieve state-of-the-art BLER performance while using less than $\sfrac{1}{10}^\text{th}$ the parameters and compute complexity compared to existing deep-learning-based schemes. By limiting to a symbol-by-symbol strategy and carefully designing the feature extractor using skip connections, combined with a training strategy that uses a very large batch size of $10^5$, \lightcode achieves a BLER up to $\sim 10^{-10}$. 

Additionally, with the help of systematic ablation studies, we have demonstrated that the self-attention module in the transformer-based GBAF code has very little effect on the BLER performance, demonstrating the limited benefit of block coding in this regime. 

Further, we interpret the \lightcode to show that power distribution in the sparse error regime of \lightcode is similar to that of \pb, where a majority of the available power is allocated to the symbols with error. Finally, we also perform a linear regression on the encoder outputs and the feedback from previous rounds. Our findings show that although the early stages of communication exhibit a linear relationship, it becomes non-linear towards the later stages, underscoring the importance of employing deep-learning-based non-linear coding techniques to attain optimal performance in regions of extremely low SNR.

\section*{Acknowledgment}
This work was partly supported by ARO Award W911NF2310062, ONR Award N000142412542, NSF Award CNS-2008824, and the 6G$@$UT center within the Wireless Networking and Communications Group (WNCG) at the University of Texas at Austin.

\newpage

\vspace{-.8em} 
 \medskip
 \small
 \bibliographystyle{IEEEtran}
 \bibliography{bibilography}

\normalsize


\begin{appendices}

\section{Longer blocklengths for \lightcode}\label{app:long_lightcode}

While the input and output dimensions for \lightcode can be increased to support learning longer blocklengths, converging to a good solution is harder because of the curse of dimensionality. In~\tabref{tab:bler_k_6_app}, we provide the performance of \lightcode trained for $K=6$ at different complexities of the encoder-decoder at forward SNR = $-1.0$ and noiseless feedback.

    \begin{table}[ht]
    \centering
    \begin{tabular}{lccc}
    \hline
    \textbf{BLER} & Enc dimension & Dec dimension & \textbf{Total \# params} \\
    \hline
    $3.7 \times 10^{-4}$ & 32 & 32& $7.3 \times 10^3$ \\
    $7.2 \times 10^{-5}$ & 128 & 64 & $1.09 \times 10^5$ \\
    $2.4 \times 10^{-6}$ & 128 & 128 & $4.1 \times 10^5$ \\
    \hline
    \end{tabular}
    \captionsetup{font=small}
    \caption{BLER for K=6 at forward SNR of $-1.0$ dB and noiseless feedback.}
    \label{tab:bler_k_6_app}
    \end{table}

We begin by comparing the performance of \lightcode for $K=6$ while maintaining the same encoder-decoder dimension used for $K=3$, which is $32$. From~\figref{fig:bler_3_9_noiseless}, the BLER of \lightcode trained with $K=3$ at a forward SNR of $-1.0$ dB and noiseless feedback is $4.5 \times 10^{-10}$. Hence, by transmitting $6$ bits as two sub-blocks, the BLER would be given by $1 - (1 - 4.5 \times 10^{-10})^2 = 9 \times 10^{-10}$. But, as seen from~\tabref{tab:bler_k_6_app}, the BLER performance for \lightcode trained with $K=6$ and a hidden dimension of $32$ is orders of magnitude higher, $3.7 \times 10^{-4}$. Next, we increase the hidden dimensions to $128$, increasing the total number of parameters by $100 \times$, and the performance is still much worse than \lightcode trained with $K=3$.

\newpage

\section{Encoder decoder architecture}\label{app:arch}
\begin{figure}[!h]
    \centering
    \includegraphics[width=0.9\linewidth]{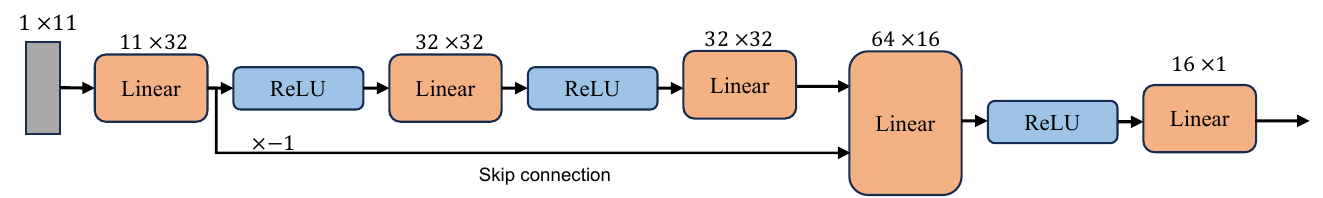}
    \captionsetup{font=small}
    \caption{Encoder}
    \label{fig:encoder}
    \vspace{0.5em}  

    \includegraphics[width=\linewidth]{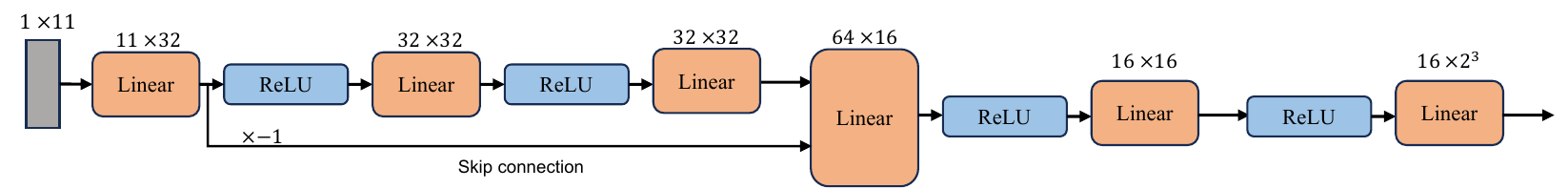}
    \captionsetup{font=small}
    \caption{Decoder}
    \label{fig:decoder}
    
    \captionsetup{font=small}
    \caption{Full architecture for \lightcode}
    \label{fig:encoder_decoder_arch}
\end{figure}




\section{Algorithm for \gn}\label{app:alg_gn}

\begin{algorithm}[hbt!]
\caption{\gn coding scheme}
\label{alg:GN}

\KwInput{Message symbol $\Theta$, number of rounds $D$, noise variance $\sigma_{ff}^2$ }

\vspace{5pt}
\textbf{Round 1:}\
\textbf{Tx:} 
Power normalization: $X_1 = \sqrt{P_1}\Theta$\;
Transmit: $Y_1 = X_1 + N_1$\;
\textbf{Rx:} Send $Y_1$ as feedback to Tx \;


\textbf{Round 2:}\
\textbf{Tx:} 
Power normalization: $X_2 =  \frac{\sqrt{P_2}U_2}{\sigma_2}$; $U_2 = N_1$ and $\sigma_2 = \sigma_{ff}$\;
Transmit: $Y_2 = X_2 + N_2$\; 
\textbf{Rx:} Compute the 
LMMSE estimate of transmit symbol $\E [U_2 | Y_2] = \frac{\sigma_2 \sqrt{P_2}Y_2}{1 + P_2}$\;
\vspace{5pt}
Update the estimate as $\hat{X_1} = X_1 - \E [U_2 | Y_2]$

\vspace{5pt}
\texttt{/* Tx sends the error in estimate $\epsilon_2 = \E [U_2 | Y_2] - U_2$ over $D-2$ rounds */} \\
\vspace{5pt}

\While{$3 \leq i \leq D-1$}{  
    \textbf{Tx:} Compute the error in estimate of previous round  $U_{i} = \E [U_{i-1} | Y_{i-1}] - U_{i-1}$ \;
    Power normalization: $X_i =  \frac{\sqrt{P_2}U_i}{\sigma_i}$, $\sigma^2_{i} = \frac{\sigma^2_{i-1}}{1+S_{i-1}}$\;
    Transmit:  $Y_{i} = X_{i} + N_{i}$\;
    \textbf{Rx:} Send feedback as LMMSE estimate of transmit symbol $\E [U_i | Y_i] = \frac{\sigma_i \sqrt{P_2}Y_i}{1 + P_2}$ \;
    \vspace{5pt}
    Update the estimate as $\hat{X_1} = X_1 - \sum_{j=2}^{j=i} \E [U_j | Y_j]$
}
\vspace{5pt}
\textbf{Decoding after round $D-1$}: Map $\hat{\Theta}_{D-1} = \frac{\hat{X_1}}{\sqrt{P_1}\eta}$ to the closest symbol in the $2^K$ PAM constellation.

\vspace{5pt}
\texttt{/* High-SNR scheme for rounds $D$ */} \\
\vspace{5pt}

\textbf{Round D:}\
\textbf{Tx:} Compute the difference in PAM indices $U = \hat{M} - M$, where $\hat{M}$ and $M$ correspond to the integer index from PAM constellation for $\hat{\Theta}$ and $\Theta$ respectively\;
Transmit: $Y_D = X_D + N_D$\;
\vspace{5pt}
\textbf{Final Decoding}: Use ML decoder to detect $\hat{U}$ and detect original PAM signal using $\hat{U}$.

\end{algorithm}

\end{appendices}








\end{document}